\newtheorem{theorem}{Theorem}
\newtheorem{lemma}{Lemma}
\newtheorem{proposition}[lemma]{Proposition}
\newtheorem{corollary}[lemma]{Corollary}
\newtheorem{fact}[lemma]{Fact}
\theoremstyle{definition}
\newtheorem{ralgo}{Reduction algorithm}
\theoremstyle{remark}
\title{Exponential-time Quantum Algorithms for Graph Coloring Problems}
\author{Kazuya Shimizu$^\dagger$ and Ryuhei Mori${}^\dagger{}^\ddagger$}
\date{\small \texttt{shimizu.k.ap@m.titech.ac.jp}\\ \texttt{mori@c.titech.ac.jp}\\ $\dagger$School of Computing, Tokyo Institute of Technology, Japan\\ $\ddagger$Japan Science and Technology Agency, PRESTO}
\begin{document}
\begin{titlingpage}
\maketitle

\begin{abstract}
The fastest known classical algorithm deciding the $k$-colorability of $n$-vertex graph requires running time $\Omega(2^n)$ for $k\ge 5$.
In this work, we present an exponential-space quantum algorithm computing the chromatic number with running time $O(1.9140^n)$ using quantum random access memory (QRAM)\@.
Our approach is based on Ambainis et al's quantum dynamic programming with applications of Grover's search to branching algorithms.
We also present a polynomial-space quantum algorithm not using QRAM for the graph $20$-coloring problem with running time $O(1.9575^n)$.
In the polynomial-space quantum algorithm, we essentially show $(4-\epsilon)^n$-time \textit{classical} algorithms that can be improved quadratically by Grover's search.
\end{abstract}
\end{titlingpage}
\setcounter{page}{1}

\section{Introduction}
Exhaustive search is believed to be (almost) the fastest classical algorithm for many NP-complete problems including SAT, hitting set problem, etc~\cite{cygan2016problems}.
Grover's quantum search quadratically improves the running time of exhaustive search~\cite{grover1996fast}.
Hence, the best classical running time for many NP-complete problems can be quadratically improved by quantum algorithms.
On the other hand, non-trivial faster classical algorithms are known for some NP-complete problems including the travelling salesman problem (TSP), the graph coloring problem, etc.
For these problems, more complicated techniques, such as dynamic programming, arithmetic algorithm based on inclusion--exclusion principle, etc., are used in the fastest known classical algorithms.
It is not obvious how to boost these classical algorithms by a quantum computer.
Recently, Ambainis et~al.\ showed a general idea of quantum dynamic programming using quantum random access memory (QRAM) and showed quantum speedup for many NP-hard problems including TSP, set cover, etc~\cite{ambainis2019quantum}.
Ambainis et al.'s work gives a new general method for exact exponential-time quantum algorithms.

In this work, we present exact exponential-time quantum algorithms for the graph coloring problem.
The fastest known classical algorithm computes the chromatic number of $n$-vertex graph with running time $\mathrm{poly}(n)2^n$ on the random access memory (RAM) model.
The main result of this work is the following theorem.
\begin{theorem}\label{thm:main}
There is an exponential-space bounded error quantum algorithm using QRAM for the chromatic number problem with running time $O^*\left((2^{37/35}3^{3/7}5^{-9/70}7^{-5/28})^n\right)\footnote{In this paper, $O^*(f(n))$ means $O(\mathrm{poly}(n)f(n))$.}=O(1.9140^n)$.
\end{theorem}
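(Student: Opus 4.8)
The plan is a ``precompute-then-search'' quantum dynamic program in the style of Ambainis et al. I first reduce computing $\chi(G)$ to deciding $k$-colorability for each $k$ (a $\mathrm{poly}(n)$ loss) and work with the subset-indexed quantity $c(S):=\chi(G[S])$, which obeys the covering recursion $c(S)=\min_{A\sqcup B=S}\bigl(c(A)+c(B)\bigr)$, obtained by grouping color classes. The algorithm computes $c(V)$ by (i) tabulating $c$ on all small subsets classically into QRAM, and (ii) assembling $V$ out of a bounded number of those blocks by nested Grover search / quantum minimum finding.

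For step (i), fix a threshold fraction $\gamma$ and tabulate $c(S)$ for all $|S|\le\gamma n$, sharing work across subsets: compute the number $s(S')$ of independent subsets of $S'$ for all $|S'|\le\gamma n$ via its linear recurrence, then recover the bits $[\chi(G[S])\le j]$ by inclusion--exclusion, $\#\{j\text{-colorings of }G[S]\}=\sum_{S'\subseteq S}(-1)^{|S\setminus S'|}s(S')^{j}$ (equivalently, as boolean covering powers of the independence predicate computed by ranked zeta/M\"obius transforms restricted to the down-set $\{|S|\le\gamma n\}$). This costs $O^{*}\bigl(\binom{n}{\gamma n}\bigr)$ time and space, all kept in QRAM.

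For step (ii), I decide $\chi(G)\le k$ by a nested Grover search that peels $V$ into $m=O(1/\gamma)$ successive blocks, each of size $\le\gamma n$, reading each $c$-value from QRAM; the search cost is the square root of the number of such assemblies, roughly $\sqrt{\binom{n}{a_1n,\dots,a_mn}}$ for the chosen block sizes $a_i$. The correctness of restricting to few small blocks rests on a structural fact: if $\chi(G[S])$ is large then $G[S]$ has an optimal coloring many of whose classes have size $O(\gamma n)$, so one can split off a block of size $\Theta(\gamma n)$ that is a union of such small classes; if instead there are too few small classes, their union has bounded chromatic complement and one falls back to the bounded-$\chi$ case. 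That case, $\chi=O(1/\gamma)$, is a constant-$k$ colorability problem, which I handle by a Byskov-style branching algorithm (enumerating maximal independent sets) quadratically sped up by Grover --- the ``Grover applied to branching'' ingredient. With error amplification, the nested-Grover bookkeeping costs only $\mathrm{poly}(n)$ per level.

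Finally I optimize: choose $\gamma$, the (in general unequal) block sizes $a_1,\dots,a_m$, and the DP/branching cut-off so as to equalize the precompute cost, the QRAM cost, the Grover-assembly cost, and the Grover-branching cost; solving this small rational program produces the exponent $2^{37/35}3^{3/7}5^{-9/70}7^{-5/28}$, with the primes $2,3$ coming from the $3^{n/3}$ bound on maximal independent sets and from binomial coefficients, and $5,7$ from the optimal split fractions (denominator $7$). I expect the main obstacle to be \emph{the correctness of the restricted search}: the minimum in the covering recursion is genuinely over all bipartitions, and the naive restrictions (balanced splits, or always peeling a single small color class) are false --- e.g. for graphs of moderately large chromatic number with one huge color class. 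Making the ``large $\chi$ implies enough small color classes'' lemma precise, carving out the bounded-$\chi$ regime cleanly, and verifying that together these bound the number of peels by $O(1/\gamma)$ while keeping the nested-Grover overhead polynomial, is the technical core.
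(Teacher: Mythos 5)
Your high-level architecture (tabulate $\chi(G[S])$ for small $S$ into QRAM, then assemble $V$ by nested Grover over a covering recursion, with Grover-on-branching for independent-set enumeration) matches the paper, and your inclusion--exclusion precompute is a legitimate (even slightly cheaper) alternative to the paper's Lawler-formula precomputation. But the step you yourself flag as the obstacle --- justifying the restriction of $\min_{A\sqcup B=S}(c(A)+c(B))$ to a small search space --- is a genuine gap, and the dichotomy you propose (``large $\chi$ implies enough small color classes, else fall back to bounded $\chi$'') is both harder than necessary and not correct as stated: in a coloring with one class of size $n-\chi+1$ and the rest singletons, the union of all classes of size $\le\gamma n$ can be far smaller than $\gamma n$ even though $\chi$ is unbounded, so you can neither ``split off a block of size $\Theta(\gamma n)$ that is a union of small classes'' nor land in a bounded-$\chi$ regime. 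The paper's resolution is different and much simpler: first peel off the \emph{largest} color class, extended to a maximal independent set $I$ (this is exactly Lawler's recursion $\chi(G)=1+\min_{I\in\mathrm{MIS}(G)}\chi(G[V\setminus I])$), and then observe (Fact~\ref{fa:bcut}, a greedy bin-packing statement about the remaining class sizes $a_2,\dotsc,a_k$, all bounded by the removed $a_1$) that the remaining classes can always be grouped into two parts each of size at most $n/2$. Iterating this once more gives a depth-two recursion ending in blocks of size at most $n/4$, which is why the threshold is $\gamma=1/4$ and why there is no ``$m=O(1/\gamma)$ successive peels'' bookkeeping to control.

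Separately, you have not actually derived the claimed exponent; asserting that ``solving this small rational program produces $2^{37/35}3^{3/7}5^{-9/70}7^{-5/28}$'' is not a proof, and your attribution of the primes $5,7$ to split fractions is off. In the paper the search over the peeled MIS is stratified by its size $t$, costing $\sqrt{I(n,t)}$ per stratum via Byskov's bound $I(n,\delta n)=O(2^{E(\delta)n})$ with $E(1/s)=(\log s)/s$; the second-level cost $T_2(m)$ is maximized at MIS fraction $\delta=1/5$ (giving $O^*(80^{m/10})$, the source of the prime $5$), and the top level at $\delta=1/7$, $\lambda=1/2$ (the source of $7$), yielding $O^*\bigl((7^{1/14}2^{h(7/12)\cdot 3/7}80^{1/20})^n\bigr)$. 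Without the MIS-peeling structure and the $I(n,t)$ stratification, your assembly cost $\sqrt{\binom{n}{a_1n,\dotsc,a_mn}}$ does not reproduce this bound. To repair the proof: replace your structural dichotomy with Fact~\ref{fa:bcut}, fix the recursion to exactly two levels of (peel one MIS, then balanced bipartition), and carry out the two-level optimization over the MIS-size fractions.
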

The quantum algorithm in Theorem~\ref{thm:main} is based on Ambainis et al's quantum dynamic programming for TSP with applications of Grover's search to Byskov's algorithm enumerating all maximal independent sets (MISs) of fixed size~\cite{byskov2004enumerating}.
Byskov's algorithm is not naive exhaustive search, but is a branching algorithm (also referred as Branch \& Reduce), for which Grover's search can be applied~\cite{furer2008solving}.
While RAM is widely accepted model of classical computation, QRAM is sometimes criticized due to the difficulty of implementation.
In this paper, we also present quantum algorithms not using QRAM\@.
\begin{theorem}\label{thm:maing}
For $k\le 20$, there exists $\epsilon>0$ such that
there is polynomial-space bounded error quantum algorithms not using QRAM for the graph $k$-coloring problem with running time $2^{(1-\epsilon)n}$.
\end{theorem}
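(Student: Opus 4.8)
The plan is to reduce, for each fixed $k \le 20$, the $k$-colorability of a graph $G$ to a search over nested vertex partitions whose leaves are colorability questions with very few colors, and then accelerate that search with Grover's algorithm. The identity I would build on is that $G$ is $(a+b)$-colorable if and only if there is a bipartition $V(G) = A \sqcup B$ with $G[A]$ being $a$-colorable and $G[B]$ being $b$-colorable, since disjoint color palettes can be used on the two sides. Applying this recursively expresses $k$-colorability as a nested search over subsets $A$, bottoming out in small cases: $2$-colorability (bipartiteness, polynomial time), $3$-colorability (the $O^*(1.3289^n)$ polynomial-space branching algorithm of Beigel and Eppstein), and $4$-colorability (the $O^*(1.7272^n)$ polynomial-space branching algorithm of Fomin, Gaspers and Saurabh), together with a few hand-optimized intermediate palettes. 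If $c_a$ denotes the running-time base of the best polynomial-space $a$-coloring routine I use, then the $(a+b)$-coloring procedure costs $\sum_{A\subseteq V}\bigl(c_a^{|A|}+c_b^{n-|A|}\bigr)\cdot\mathrm{poly}(n) = O^*\bigl((1+\max(c_a,c_b))^n\bigr)$, so the bases obey $c_{a+b}\le 1+\max(c_a,c_b)$; choosing the split of each $k$ well, and using the sharpest base cases, one keeps $c_k<4$ for every $k$ up to $20$, i.e.\ a classical $(4-\epsilon)^n$-time polynomial-space algorithm.

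Each such classical procedure is a branching algorithm: at an internal node one branches over the (at most $2^n$) subsets of the current vertex set, and at a leaf one runs a polynomial-time check or a constant-degree base-case branching algorithm. Hence, following F\"urer, Grover's search over the at most $c_k^n$ leaves yields a bounded-error quantum algorithm running in $O^*\bigl(c_k^{n/2}\bigr)=O^*\bigl(2^{(1-\epsilon_k)n}\bigr)$ in polynomial space, with no QRAM\@. I would in fact implement the recursion directly with (variable-time) amplitude amplification: to test $(a+b)$-colorability, amplitude-amplify over $A\subseteq V$, calling the recursively built bounded-error quantum subroutines for $a$-colorability of $G[A]$ and $b$-colorability of $G[B]$ inside the oracle. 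Variable-time amplitude amplification then gives $q_{a+b}^2\le 1+\max(q_a^2,q_b^2)$ for the quantum running-time bases --- the same recursion, with the square root already absorbed --- so $q_k=\sqrt{c_k}<2$ for $k\le 20$, and the base-case algorithms are used essentially verbatim so that no QRAM enters.

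The main obstacle I anticipate is not the idea but making the accounting tight enough. First, the bound $k\le 20$ is exactly what this recursion delivers, so the argument is sensitive to the choice of base cases and of how each $k$ is split; I would need to check the recursion numerically for all $k\le 20$ and, for a few intermediate values, substitute a tailored decomposition for the generic one. Second, the recursion nests bounded-error, \emph{variable}-running-time quantum subroutines to depth $\Theta(\log k)$, so I would need the usual safeguards --- error reduction at every level and variable-time amplitude amplification in the style of Ambainis et al.\ --- to keep the overall failure probability bounded and prevent the polylogarithmic overheads from creeping into the exponent. Finally, I should verify that following one leaf of the branching tree (decoding the nested subsets along a candidate index and then running the base-case algorithm) costs only $\mathrm{poly}(n)$ time, so that the Grover layer genuinely contributes only a square root and the $2^{(1-\epsilon)n}$ bound survives.
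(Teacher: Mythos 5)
Your architecture --- recursive palette splitting, polynomial-space branching base cases, and a Grover/amplitude-amplification layer over the leaves with error reduction at each level --- is the same skeleton the paper uses. But the accounting at the heart of your proposal does not close, and the theorem's content is precisely the numerical threshold $k\le 20$. Your recursion sums over \emph{all} subsets $A\subseteq V$, giving $c_{a+b}\le 1+\max(c_a,c_b)$. With base cases $c_2=1$, $c_3=1.3289$, $c_4=1.7272$ this yields $c_5=\dots=c_8\le 2.7272$, $c_9=\dots=c_{12}\le 3.3289$, $c_{13}=\dots=c_{16}\le 3.7272$, but every split of $17$ puts at least $9$ colors on one side, so $c_{17}\le 1+3.3289=4.3289>4$. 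Your method therefore proves the statement only for $k\le 16$. The first missing idea is that one may assume $A$ contains the $k'$ largest color classes, hence $|A|\ge\lceil nk'/k\rceil$; restricting the subset sum to $t\ge\lceil nk'/k\rceil$ replaces $\log(1+\max(c_a,c_b))$ by $\max_{\delta\in[k'/k,1]}\{h(\delta)+\max(\delta\log c_{k'},(1-\delta)\log c_{k-k'})\}$, which is strictly smaller exactly because the unconstrained maximizer of $h(\delta)+(1-\delta)\log c_{k-k'}$ lies below $k'/k$. The paper also mixes in a second reduction (peel off a maximal independent set of size $\ge n/k$, enumerated within Byskov's bound $I(n,t)$, and recurse on $(k-1)$-coloring), which is what produces the good exponents for $k=4,5$ that seed the recursion.

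Even with both of those ingredients the paper only reaches $k\le 19$ (its Table of $f^*_k$ gives $f^*_{20}>1$, i.e.\ a classical base above $4$). Getting to $k=20$ requires a further refinement you do not have: when $G[S]$ is split off to be $k'$-colored, the $k-k'$ color classes of $G[V\setminus S]$ may be assumed to have size at most $\lfloor t/k'\rfloor$ (the average class size in $G[S]$), and propagating this upper bound as an extra parameter through the recursion shrinks the range of $t$ at deeper levels and brings the exponent for $k=20$ just below $1$. Two smaller cautions: the Fomin--Gaspers--Saurabh $4$-coloring algorithm relies on path decompositions and is not obviously a branching algorithm to which the quadratic Grover speedup of F\"urer's observation applies, so it is unsafe as a quantum base case (the paper instead derives its $4$- and $5$-coloring routines from MIS enumeration over the $3$-coloring base); and your variable-time amplitude amplification is a workable substitute for the paper's explicit leaf-indexing lemma, so that part is not the problem --- the problem is that the quantity being square-rooted is too large.
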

Note that classical algorithms with running time $2^{(1-\epsilon)n}$ are known only for $k=3,4$~\cite{beigel20053}, \cite{byskov2004enumerating}.
Running times of the quantum algorithms in Theorem~\ref{thm:maing} are shown in Table~\ref{table:20}.
For proving Theorem~\ref{thm:maing}, we essentially show \textit{classical} algorithms with running time $4^{(1-\epsilon)n}$ that can be improved quadratically by Grover's search.
These classical algorithms are obtained by generalizing Byskov's techniques for reducing the graph $k(\ge 4)$-coloring problem to the graph 3-coloring problem~\cite{byskov2004enumerating}.

\begin{table}[t]
\centering
\caption{$O(2^{d^*_kn})$-time quantum algorithms not using QRAM\@.}
\label{table:20}
\begin{tabular}{|c|c|c|}
\hline
$k$& $d^*_k$& $2^{d^*_k}$\\
\hline
3 & 0.2051 & 1.1528\\
4 & 0.4039 & 1.3231\\
5 & 0.5553 & 1.4695\\
6 & 0.6099 & 1.5261\\
7 & 0.7234 & 1.6511\\
8 & 0.7299 & 1.6585\\
\hline
\end{tabular}
\begin{tabular}{|c|c|c|}
\hline
$k$& $d^*_k$& $2^{d^*_k}$\\
\hline
9 & 0.8041 & 1.7460\\
10 & 0.8298 & 1.7775\\
11 & 0.8298 & 1.7775\\
12 & 0.8676 & 1.8246\\
13 & 0.8874 & 1.8499\\
14 & 0.8938 & 1.8580\\
\hline
\end{tabular}
\begin{tabular}{|c|c|c|}
\hline
$k$& $d^*_k$& $2^{d^*_k}$\\
\hline
15 & 0.9488 & 1.9303\\
16 & 0.9488 & 1.9303\\
17 & 0.9488 & 1.9303\\
18 & 0.9536 & 1.9366\\
19 & 0.9690 & 1.9575\\
20 & 0.9691 & 1.9575\\
\hline
\end{tabular}
\end{table}  
             
\subsection{Related work}
Since a graph is $k$-colorable if and only if the set of vertices can be partitioned into $k$ independent sets, many algorithms for the graph $k$-coloring problem use enumeration algorithms of independent sets.
There is a simple branching algorithm enumerating all MISs in time $O^*(3^{n/3})=O(1.4423^n)$~\cite{fomin2010}.
Lawler showed that 3-colorability can be decided in time $O^*(3^{n/3})$ by enumerating all MISs and checking the bipartiteness of the subgraph induced by the complement of each MIS~\cite{lawler1976note}.
Lawler also showed that the chromatic number can be computed in time $O(2.4423^n)$ by a simple dynamic programming.

Beigel and Eppstein showed an efficient algorithm for the graph 3-coloring problem with running time $O(1.3289^n)$~\cite{beigel20053}.
Byskov showed reduction algorithms from the graph $k(\ge 4)$-coloring problem to the graph 3-coloring problem~\cite{byskov2004enumerating}.
By using Beigel and Eppstein's graph 3-coloring algorithm, Byskov showed classical algorithms for the graph 4-, 5- and 6-coloring problems with running time $O(1.7504^n)$, $O(2.1592^n)$ and $O(2.3289^n)$, respectively.
Fomin et al.\ showed an algorithm for the graph 4-coloring problem with running time $O(1.7272^n)$ by using the path decomposition~\cite{fomin2007improved}.

In 2006, Bj\"orklund and Husfeldt, and Koivisto showed an exponential-space $O^*(2^n)$-time algorithm for the chromatic number problem on the RAM model~\cite{bjorklund2006inclusion}, \cite{koivisto20062}.
These algorithms are based on the inclusion--exclusion principle.
They also showed that if there is a polynomial-space $O^*(\alpha^n)$-time algorithm counting the number of independent sets, then there is a polynomial-space $O^*((1+\alpha)^n)$-time algorithm computing the chromatic number~\cite{bjorklund2006inclusion}, \cite{bjorklund2009set}.
Since the fastest known polynomial-space algorithm computes the number of independent sets with running time $O(1.2356^n)$~\cite{gaspers2017faster}, there is a polynomial-space $O(2.2356^n)$-time algorithm computing the chromatic number.

There is almost no previous theoretical work on quantum algorithms for the graph coloring problems.
F\"urer mentioned that Grover's algorithm can be applied to branching algorithms so that Beigel and Eppstein's algorithm for the graph 3-coloring problem can be improved to running time $O(\sqrt{1.3289}^n)=O(1.1528^n)$~\cite{furer2008solving}.
The quantum algorithms for Theorem~\ref{thm:maing} are basically obtained by applying Grover's search to generalized Byskov's reduction algorithms on the basis of F\"urer's observation.

For general NP-hard problems, Ambainis et al.\ showed exponential-space exponential-time quantum algorithms using QRAM for many NP-hard problems~\cite{ambainis2019quantum}.
The quantum algorithm for Theorem~\ref{thm:main} is based on Ambainis et al's algorithm for TSP with application of Grover's search to Byskov's algorithm enumerating MISs of fixed size on the basis of F\"urer's observation.

\subsection{Overview of quantum algorithms}
\subsubsection{Quantum algorithm for the chromatic number problem}\label{subsubsec:algqram}
Similarly to Ambainis et al.'s quantum algorithm for TSP, the quantum algorithm for Theorem~\ref{thm:main} is a simple divide-and-conquer algorithm with dynamic programming approach.
The basic classical algorithm was shown in~\cite[Proposition 3]{bjorklund2008exact}.
The chromatic number of a graph $G$ is equal to a sum of the chromatic numbers of $G[S]$ and $G[V\setminus S]$ for some non-empty $S\subsetneq V$ unless $G$ is one-colorable.
If we can assume that $S$ has size exactly $\lfloor n/2\rfloor$ or $\lceil n/2\rceil$, then we can consider a classical algorithm that recursively finds $S$ of size $\lfloor n/2\rfloor$ or $\lceil n/2\rceil$ minimizing $\chi(G[S])+\chi(G[V\setminus S])$.
Let $T(n)$ be the running time of this algorithm.
Then, it follows $T(n) = \binom{n}{\lfloor n/2\rfloor}(T(\lfloor n/2\rfloor)+T(\lceil n/2\rceil))$
so that we can apply Ambainis et al's quantum dynamic programming straightforwardly and obtain $O(1.7274^n)$-time quantum algorithm~\cite{ambainis2019quantum}.
However, the balanced partition $S$ satisfying $\chi(G)=\chi(G[S])+\chi(G[V\setminus S])$ does not necessarily exist.
Hence, we use the following useful fact.
\begin{fact}\label{fa:bcut}
Let $a_1,\dotsc,a_k$ be positive integers, and $n:=\sum_{i=1}^ka_i$.
Assume that $a_1\ge a_i$ for all $i\in\{1,2,\dotsc,k\}$.
Then, for any $m\in\{1,2,\dotsc,n-1\}$, there exists $S\subseteq\{2,3,\dotsc,k\}$ such that
$\sum_{i\in S}a_i\le m$ and $\sum_{i\in \{2,\dotsc,k\}\setminus S} a_i\le n-m-1$.
\end{fact}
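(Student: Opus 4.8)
The plan is to recast the statement as a one-dimensional packing claim and then settle it with a single extremal (greedy) argument. Write $s:=\sum_{i\in S}a_i$ and note that $\sum_{i=2}^{k}a_i=n-a_1$. Then the two required inequalities become $s\le m$ and $(n-a_1)-s\le n-m-1$, i.e.\ $s\ge m+1-a_1$. So it suffices to exhibit $S\subseteq\{2,\dots,k\}$ whose weight $\sum_{i\in S}a_i$ lies in the integer window $[\,m+1-a_1,\ m\,]$, which contains exactly $a_1$ consecutive integers.

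First I would choose $S$ extremally: among all subsets $S\subseteq\{2,\dots,k\}$ with $\sum_{i\in S}a_i\le m$, pick one whose weight $\sigma:=\sum_{i\in S}a_i$ is as large as possible. The empty set is admissible, so this collection is non-empty. By construction $\sigma\le m$, so the upper bound is automatic, and it remains to prove $\sigma\ge m+1-a_1$.

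Next I would argue by contradiction. Suppose $\sigma\le m-a_1$. If there were an index $j\in\{2,\dots,k\}\setminus S$, then since $a_j\le a_1$ we would get $\sigma+a_j\le (m-a_1)+a_1=m$, so $S\cup\{j\}$ would be admissible with strictly larger weight, contradicting maximality. Hence $S=\{2,\dots,k\}$ and $\sigma=n-a_1$; but then $n-a_1=\sigma\le m-a_1$ forces $n\le m$, contradicting $m\le n-1$. Therefore $\sigma\ge m+1-a_1$, and consequently $\sum_{i\in\{2,\dots,k\}\setminus S}a_i=(n-a_1)-\sigma\le n-m-1$, which finishes the proof.

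There is no serious obstacle here; the only points requiring care are that the hypothesis $a_1\ge a_i$ is used exactly once, in the augmentation step, and that $m\le n-1$ is used exactly once, to rule out the degenerate case $S=\{2,\dots,k\}$. An equivalent presentation I could give instead adds the items $a_2,a_3,\dots$ one at a time while keeping a running total, and stops at the first moment the total is at least $m+1-a_1$; the last increment is at most $a_1$, so the total cannot overshoot beyond $m$, and the final total $n-a_1\ge m+1-a_1$ guarantees that such a stopping point exists.
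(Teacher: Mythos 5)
Your proof is correct and is essentially the same greedy argument as the paper's: the paper takes the maximal prefix $\{2,\dotsc,t\}$ with sum at most $m$ and uses $a_{t+1}\le a_1$ to bound the complement, while you take a maximum-weight admissible subset and use $a_j\le a_1$ in the augmentation step --- the same single idea, as your closing remark about the sequential variant acknowledges. If anything, your extremal formulation is slightly more robust, since it transparently covers the degenerate case $a_2>m$ (where the paper's index $t$ is not defined and one must fall back on $S=\varnothing$).
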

\begin{proof}
Let $t:=\max\bigl\{j\in\{2,\dotsc,k\}\mid \sum_{i=2}^j\le m\bigr\}$. Let $S:=\{2,3,\dotsc,t\}$.
Then, $\sum_{i\in\{t+1,t+2,\dotsc,k\}} a_i \le \sum_{i\in\{1,t+2,t+3,\dotsc,k\}} a_i = n - \sum_{i\in\{2,3,\dotsc,t+1\}} a_i \le n - m + 1$.
\end{proof}
From Fact~\ref{fa:bcut}, we can consider following quantum algorithm computing the chromatic number.
First, the algorithm precomputes the chromatic number of all induced subgraphs with size at most $n/4$.
This precomputation is based on Lawler's formula
\begin{equation}\label{eq:lawler}
\chi(G) = 1 + \min_{I\in\mathrm{MIS}(G)} \chi(G[V\setminus I])
\end{equation}
where $\mathrm{MIS}(G)$ denotes the set of all MISs of $G$~\cite{lawler1976note}.
There is a classical algorithm enumerating all MISs with running time $3^{n/3}$.
We will show in Section~\ref{sec:branch} that Grover's search can be applied to this algorithm so that the quantum algorithm can search all MISs with running time $3^{n/6}$.
Here, computed chromatic numbers are stored to QRAM.
Hence, we can apply Grover's search for computing the minimum in~\eqref{eq:lawler}.
The precomputation requires the running time $O^*\left(\sum_{i=1}^{n/4} \binom{n}{i}3^{i/6}\right)=O(1.8370^n)$.
Then, the main part of the algorithm computes the chromatic number of $G$ by using the formula
\begin{equation*}
\chi(G) = 1 + \min_{I\in\mathrm{MIS}(G)} \min_{S\subseteq V\setminus I,\, |S|\le n/2,\, |V\setminus I\setminus S|\le n/2}\left\{\chi(G[S]) + \chi(G[V\setminus I\setminus S])\right\}
\end{equation*}
for $\chi(G)\ge 3$.
This formula is justified by Fact~\ref{fa:bcut} for $m=\lfloor n/2\rfloor$.
At the third level, the number of vertices in a graph is at most $n/4$.
Hence, the chromatic number was precomputed and stored to QRAM.
The running time of the main part of the quantum algorithm is
\begin{equation*}
O^*\left(3^{n/6} \sqrt{\binom{n}{n/2}} 3^{n/12}\sqrt{\binom{n/2}{n/4}}\right)
=O(2.2134^n).
\end{equation*}
Quantum algorithm for Theorem~\ref{thm:main} searches all MISs of size $t$ for each $t\in\{1,2,\dotsc,n\}$ separately.
Then, precise analysis shows that the running time of the improved quantum algorithm is $O(1.9140^n)$.

\subsubsection{Quantum algorithms for the graph $k$-coloring problem}
We will derive classical algorithms that can be improved quadratically by Grover's search.
In the classical algorithms, the graph $k$-coloring problem is reduced to the graph $k'$-coloring problems for some $k'<k$.
Since a graph $G$ is $k$-colorable if and only if there exists a subset $S$ of vertices such that $G[S]$ is $\lfloor k/2\rfloor$-colorable and $G[V\setminus S]$ is $\lceil k/2\rceil$-colorable.
Let us consider a classical algorithm that simply searches $S\subseteq V$ satisfying the above condition.
Let $T_k(n)$ be the running time of this algorithm for the graph $k$-coloring problem.
Then, $T_k(n)$ satisfies
\begin{align*}
T_1(n) &= T_2(n) = 1,\\
T_k(n) &= \sum_{i=0}^n\binom{n}{i}(T_{\lfloor k/2\rfloor}(i) + T_{\lceil k/2\rceil}(n-i))
\end{align*}
where polynomial factors in $n$ are ignored.
Then, we obtain $T_4(n) = O^*(2^n)$, $T_8(n)= O^*(3^n)$ and $T_{16}(n)=O^*(4^n)$.
Let us consider a quantum algorithm that uses Grover's search for finding $S$.
Let $T_k^*(n)$ be the running time of the quantum algorithm.
Then, it follows $T_k^*(n) = \sum_{i=0}^n\sqrt{\binom{n}{i}}(T_{\lfloor k/2\rfloor}^*(i) + T_{\lceil k/2\rceil}^*(n-i))$,
which implies $T_k^*(n) = O^*(\sqrt{T_k(n)})$.
Hence, we obtain $T_4^*(n) = O(1.4143^n)$, $T_8^*(n)=O(1.7321^n)$ and $T_{16}^*(n)=O^*(2^n)$.
This yields a weaker version of Theorem~\ref{thm:maing} that is valid for $k\le 8$ rather than $k\le 20$.

\subsection{Organization}
In Section~\ref{sec:pre}, notations and known classical and quantum algorithms are introduced.
In Section~\ref{sec:branch}, we present details of quantum algorithm for branching algorithms.
In Section~\ref{sec:chr}, we prove Theorem~\ref{thm:main}.
In Section~\ref{sec:kcol}, we prove a weaker version of Theorem~\ref{thm:maing} that is valid for $k\le 19$ rather than $k\le 20$.
Theorem~\ref{thm:maing} is obtained by improving the quantum algorithms in Section~\ref{sec:kcol}.
The details of the proof of Theorem~\ref{thm:maing} are shown in Appendix~\ref{apx:20alg}.

\section{Preliminaries}\label{sec:pre}
\subsection{Definitions and notations}
For a finite vertex set $V$, a set $E$ of edges consists of subsets of $V$ of size two.
A pair $(V,E)$ of finite vertex set $V$ and a set $E$ of edges is called an undirected simple graph.
In this paper, we simply call a graph rather than an undirected simple graph.
The number of vertices $|V|$ is denoted by $n$.
A mapping $c\colon V \to \{1,2,\dotsc,k\}$ is called $k$-coloring if $c(v)\ne c(w)$ for all $\{v,w\}\in E$.
For a graph $G$, the smallest $k$ such that there exists a $k$-coloring is called the chromatic number of $G$, and denoted by $\chi(G)$.
A subset $I\subseteq V$ of vertices is called an independent set if $\{v,w\}\notin E$ for all $v, w\in I$.
An independent set $I$ is said to be maximal if there is no strict superset of $I$ that is an independent set.
A maximal independent set of size $t$ is called $t$-MIS\@.
For $S\subseteq V$, $G[S]$ denotes a induced subgraph $(S, \{\{v,w\}\in E\mid v, w\in S\})$ of $G$.
Let $h(\delta) := -\delta\log\delta - (1-\delta)\log(1-\delta)$ for $\delta\in[0,1]$ where $0\log 0 = 0$.
In this paper, the base of logarithm is 2.
The notation $g(n) = O^*(f(n))$ means that $g(n) = O(n^cf(n))$ for some constant $c$.
For $O^*(\lambda^n)$, we often round $\lambda$ up to the fourth digit after the decimal point. 
In this case, we can use $O()$ rather than $O^*()$.
For example, we often write $g(n) = O(1.4143^n)$ rather than $g(n) = O^*(2^{n/2})$.
The notation $g(n) = \widetilde{O}(f(n))$ means that $g(n) = O((\log f(n))^c f(n))$ for some constant $c$.

\subsection{Known classical algorithm for enumerating all $t$-MISs}
Byskov showed the following theorem.
\begin{theorem}[Byskov~\cite{byskov2004enumerating}]\label{thm:byskov}
The maximum number of $t$-MISs of $n$-vertex graphs is
\begin{equation*}
I(n,t) := \lfloor n/t\rfloor^{(\lfloor n/t\rfloor +1) t - n}(\lfloor n/t\rfloor +1)^{n-\lfloor n/t\rfloor t}.
\end{equation*}
Furthermore, there is a classical algorithm enumerating all $t$-MISs of $n$-vertex graph in time $O^*(I(n,t))$.
\end{theorem}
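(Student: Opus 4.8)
The plan is to prove the extremal bound and the algorithmic statement together, both resting on a minimum-degree branching and one elementary inequality about products of integers. Throughout write $q:=\lfloor n/t\rfloor$ and $r:=n-qt$, so that $I(n,t)=q^{(q+1)t-n}(q+1)^{n-qt}=q^{t-r}(q+1)^{r}$.

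\textbf{Lower bound and an arithmetic fact.} First I would exhibit a graph meeting the bound: the disjoint union of $t$ cliques, $r$ of size $q+1$ and $t-r$ of size $q$. In a disjoint union of $t$ cliques a vertex set is a maximal independent set exactly when it contains one vertex of each clique; hence every such set has size exactly $t$, and the number of them is the product of the clique sizes, namely $q^{t-r}(q+1)^{r}=I(n,t)$. I would also record the purely arithmetic fact, needed below, that $I(n,t)=\max\{\prod_{i=1}^{t}x_i : x_i\ge 1 \text{ integers},\ \sum_i x_i=n\}$ (standard smoothing: if some $x_i\ge x_j+2$, replacing the pair by $x_i-1$ and $x_j+1$ strictly increases the product, so any optimum has all parts within one of each other), together with the monotonicity $I(m-1,s)\le I(m,s)$ for $m>s$ (append $1$ to a part of an optimal decomposition).

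\textbf{Upper bound.} For the matching upper bound I would induct on $t$ the statement that every $n$-vertex graph has at most $I(n,t)$ maximal independent sets of size $t$. The base case $t=1$ holds since a $1$-MIS is a universal vertex and there are at most $n=I(n,1)$ of these. For the step, pick a vertex $v$ of minimum degree $d$. Every maximal independent set is dominating, so every $t$-MIS of $G$ meets the closed neighbourhood $N[v]$, and therefore the number of $t$-MISs is at most $\sum_{u\in N[v]}(\text{number of }t\text{-MISs of }G\text{ containing }u)$. For fixed $u$, the map $I\mapsto I\setminus\{u\}$ is a bijection between the $t$-MISs of $G$ containing $u$ and the $(t-1)$-MISs of $G[V\setminus N[u]]$: independence is immediate, and maximality is preserved in both directions because any candidate vertex outside $N[u]$ would, by maximality in the other graph, have a neighbour in the independent set, necessarily distinct from $u$. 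Since $G[V\setminus N[u]]$ has $n-\deg(u)-1\le n-d-1$ vertices, the induction hypothesis and monotonicity of $I$ bound this count by $I(n-d-1,t-1)$, so summing over the $d+1$ vertices of $N[v]$ gives at most $(d+1)\,I(n-d-1,t-1)$. The proof then closes with the inequality $(d+1)\,I(n-d-1,t-1)\le I(n,t)$: an optimal decomposition of $n-d-1$ into $t-1$ positive integers, together with one extra part equal to $d+1$, is a decomposition of $n$ into $t$ positive integers whose product is $(d+1)\,I(n-d-1,t-1)$, hence at most $I(n,t)$ by the arithmetic fact above. (If $n-d-1<t-1$ the term is $0$; if $d=0$ only the branch $u=v$ is nonempty, which is the same inequality with an appended part of size $1$.) Summing over $t$ recovers the Moon--Moser bound $3^{n/3}$ as a special case.

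\textbf{The enumeration algorithm, and the main obstacle.} The algorithm is the constructive counterpart: on input $(G,t)$, if $V(G)=\emptyset$ return $\{\emptyset\}$ when $t=0$ and nothing otherwise; else pick a minimum-degree vertex $v$, fix an ordering $u_1,\dots,u_{d+1}$ of $N[v]$, recurse on $(G[V\setminus N[u_i]],\,t-1)$ for each $i$, adjoin $u_i$ to each returned set, and discard a returned set $J$ unless $u_i$ is the first element of $N[v]$ it contains (a polynomial check ensuring each $t$-MIS is listed once). The recursion tree has leaf count $L(n,t)\le(d+1)\,L(n-d-1,t-1)$ with $L(0,0)=1$, hence $L(n,t)\le I(n,t)$ by the inequality above; since the tree has depth at most $n$ and every node and every generated set is processed with $\mathrm{poly}(n)$ overhead, the total running time is $O^*(I(n,t))$. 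I expect the main obstacle to be not the counting — which reduces to routine AM--GM-style smoothing once $I(n,t)$ is identified with the maximum product of $t$ positive integers summing to $n$ — but the algorithmic bookkeeping: one must check that the branching, with its de-duplication filter, emits every $t$-MIS exactly once and never a non-maximal or wrongly sized set, which hinges on correctly handling the vertices of $N[v]$ that are neither $u_i$ nor adjacent to $u_i$, and one must dispose of the degenerate cases (isolated vertices, $t$ below the number of components, disconnected inputs) so that the clean recursion $L(n,t)\le(d+1)\,L(n-d-1,t-1)$ holds at every node.
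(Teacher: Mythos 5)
Your proof is correct, and it reconstructs essentially the argument Byskov himself gives: the paper states this theorem as a citation of~\cite{byskov2004enumerating} and contains no proof of its own, so there is nothing internal to compare against. Your three ingredients --- the disjoint union of $t$ balanced cliques for the lower bound, the identification of $I(n,t)$ with the maximum product of $t$ positive integers summing to $n$ via smoothing, and the minimum-degree branching with the bijection $I\mapsto I\setminus\{u\}$ onto $(t-1)$-MISs of $G[V\setminus N[u]]$ --- are exactly the standard route, and the degenerate cases ($n-d-1<t-1$, isolated $v$, the $t=0$ base of the recursion) are handled correctly. The enumeration algorithm and the leaf-count bound $L(n,t)\le(d+1)L(n-d-1,t-1)\le I(n,t)$ also match what the paper needs later when it applies Lemma~\ref{lem:or} to this branching.
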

We can straightforwardly obtain the following lemma and corollary.
\begin{lemma}\label{lem:E}
For any constant $\delta\in(0,1)$, $I(n, \lfloor\delta n\rfloor)=O(2^{E(\delta)n})$ where
\begin{equation*}
E(\delta) := ((\lfloor \delta^{-1}\rfloor +1)\delta - 1)\log\lfloor\delta^{-1}\rfloor + (1-\lfloor\delta^{-1}\rfloor\delta)\log(\lfloor\delta^{-1}\rfloor+1).
\end{equation*}
Here, $E(\delta)$ is concave (and hence, continuous) and piecewise linear for $\delta\in(0,1)$.
The maximum of $E(\delta)$ is given at $\delta=1/3$.
\end{lemma}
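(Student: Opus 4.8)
The plan is to substitute $t = \lfloor\delta n\rfloor$ directly into Byskov's formula $I(n,t)$ and track the exponent. Write $q := \lfloor n/t\rfloor$; since $t = \lfloor\delta n\rfloor$, for $n$ large one has $q = \lfloor\delta^{-1}\rfloor$ up to lower-order perturbations, and I would start by making this identification precise (the set of $\delta$ for which $\lfloor n/\lfloor\delta n\rfloor\rfloor$ fails to stabilize at $\lfloor\delta^{-1}\rfloor$ is contained in a negligible neighborhood of the reciprocals of integers, and at those exact points both one-sided limits of $E$ agree by the piecewise-linear structure, so the $O(2^{E(\delta)n})$ bound is unaffected). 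Taking $\log$ of $I(n,t) = q^{(q+1)t-n}(q+1)^{n-qt}$ gives
\begin{equation*}
\log I(n,t) = \bigl((q+1)t - n\bigr)\log q + \bigl(n - qt\bigr)\log(q+1).
\end{equation*}
Dividing by $n$ and using $t/n \to \delta$, $q \to \lfloor\delta^{-1}\rfloor$ yields $\tfrac1n\log I(n,t) \to ((\lfloor\delta^{-1}\rfloor+1)\delta - 1)\log\lfloor\delta^{-1}\rfloor + (1-\lfloor\delta^{-1}\rfloor\delta)\log(\lfloor\delta^{-1}\rfloor+1) = E(\delta)$, and a routine bookkeeping of the $\pm 1$ errors in $q$ and the floor in $t$ shows the difference is $O(\log n)$ in the exponent, i.e.\ a $\mathrm{poly}(n)$ factor, which is absorbed into $O(2^{E(\delta)n})$.

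Next I would establish the structural claims about $E$. On each interval $\delta \in (1/(j+1), 1/j]$ for integer $j \ge 1$ we have $\lfloor\delta^{-1}\rfloor = j$ constant, so $E(\delta) = ((j+1)\delta - 1)\log j + (1 - j\delta)\log(j+1)$ is \emph{affine} in $\delta$ there; hence $E$ is piecewise linear with breakpoints at $\delta = 1/j$. To get concavity (and thus continuity) on all of $(0,1)$, it suffices to check continuity at each breakpoint $\delta = 1/j$ — evaluating the formula on the interval to the left (where $\lfloor\delta^{-1}\rfloor = j$) and on the interval to the right (where $\lfloor\delta^{-1}\rfloor = j-1$) at $\delta = 1/j$ both give $\log j$, so $E$ is continuous — and then to check that the slopes are non-increasing as $\delta$ increases across each breakpoint. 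The slope on $(1/(j+1), 1/j]$ is $(j+1)\log j - j\log(j+1) = \log\bigl(j^{j+1}/(j+1)^j\bigr)$; I would verify this is a decreasing function of $j$, equivalently that $j^{j+1}/(j+1)^j$ decreases in $j$, which follows from $(1+1/j)^j$ being increasing and standard manipulation. Since larger $\delta$ corresponds to smaller $j$, the slope decreases as $\delta$ increases, giving concavity.

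Finally, the maximizer: $E$ is concave and piecewise linear, so its maximum is attained at a breakpoint $\delta = 1/j$ (or at an endpoint, but $E \to 0$ at both ends of $(0,1)$). At $\delta = 1/j$ we computed $E(1/j) = \log j$, so among breakpoints we want to maximize... wait — this would suggest $\log j$ grows, contradicting concavity, so the correct reading is that $E(1/j) = \log j$ cannot be the interior picture; rechecking, at $\delta=1/j$ the value is $\log j$ but concavity forces these to be below the chord, so the true interior maximum is at the breakpoint where the slope changes sign, i.e.\ where the left slope $\log(j^{j+1}/(j+1)^j)$ is $\ge 0$ and the right slope $\log((j-1)^j/j^{j-1})$ is $\le 0$; solving $j^{j+1} \ge (j+1)^j$ and $(j-1)^j \le j^{j-1}$ pins down $j$, and I expect the comparison of $E(1/2) = \log 2$, $E(1/3) = \log 3$, $E(1/4) = \log 4$ together with the slope signs to single out $\delta = 1/3$ as claimed (noting $3^{n/3} = 2^{(\log 3)n/3}$ is exactly the classical MIS-enumeration bound, which is the sanity check). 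The main obstacle is purely clerical: getting the floor-induced error terms in the asymptotic evaluation of $\log I(n, \lfloor\delta n\rfloor)$ to collapse cleanly into a polynomial factor, and handling the measure-zero set of "bad" $\delta$ where $\lfloor n/\lfloor\delta n\rfloor\rfloor$ oscillates between $\lfloor\delta^{-1}\rfloor$ and $\lfloor\delta^{-1}\rfloor+1$; both are handled by the continuity of $E$ at the reciprocal-of-integer points established above.
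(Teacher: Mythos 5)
The paper gives no proof of this lemma (it is dismissed as a straightforward consequence of Theorem~\ref{thm:byskov}), and your overall plan --- substitute $t=\lfloor\delta n\rfloor$ into $I(n,t)$, check that $\lfloor n/t\rfloor$ stabilizes at $\lfloor\delta^{-1}\rfloor$, and analyze the resulting piecewise-affine exponent interval by interval --- is exactly the intended computation. The first half of your argument is sound; in fact it is simpler than you make it, since $\lfloor n/\lfloor\delta n\rfloor\rfloor=\lfloor\delta^{-1}\rfloor$ holds for \emph{every} fixed constant $\delta\in(0,1)$ once $n$ is large (including when $\delta^{-1}$ is an integer), so no ``bad set'' of $\delta$ needs special treatment.

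The structural analysis of $E$, however, contains two concrete errors that leave the concavity claim and the maximizer unproved as written. First, the breakpoint value is $E(1/j)=(\log j)/j$, not $\log j$: substituting $\delta=1/j$ into $((j+1)\delta-1)\log j+(1-j\delta)\log(j+1)$ gives $\tfrac1j\log j$ (and the same from the right-hand interval, so continuity does hold). This miscomputation is precisely what produces the contradiction you notice mid-proof; with the correct value the sequence $\tfrac12\log 2,\ \tfrac13\log 3,\ \tfrac14\log 4,\dotsc$ peaks at $j=3$, matches your own $3^{n/3}$ sanity check, and is perfectly compatible with concavity. Second, the slope $s(j)=(j+1)\log j-j\log(j+1)=\log\bigl(j^{j+1}/(j+1)^j\bigr)$ on $(1/(j+1),1/j]$ is \emph{increasing} in $j$, not decreasing: $j^{j+1}/(j+1)^j=j/(1+1/j)^j$ has a bounded, increasing denominator and a linearly growing numerator, and the values $\tfrac12,\tfrac89,\tfrac{81}{64},\tfrac{1024}{625},\dotsc$ increase. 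Your write-up asserts the opposite monotonicity and then also inverts the inference (a slope decreasing in $j$ would be increasing in $\delta$, i.e.\ convexity); the two sign errors cancel to yield the right conclusion, but neither step is correct. The repaired argument is short: $s(j)$ increases with $j$, larger $\delta$ means smaller $j$, so the slope is non-increasing in $\delta$ and $E$ is concave; moreover $s(1)=-\log 2<0$, $s(2)=\log(8/9)<0$, and $s(j)\ge s(3)=\log(81/64)>0$ for $j\ge 3$, so $E$ increases on $(0,1/3]$ and decreases on $[1/3,1)$, pinning the maximum at $\delta=1/3$ with value $(\log 3)/3$.
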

\begin{corollary}\label{cor:I}
For any $a\in\mathbb{R}_{\ge 0}$ and $t\in\mathbb{Z}_{\ge 3}$, the maximum of $E(\delta) -  a \delta$ for $\delta\in[1/t,1]$ is given at $\delta=1/s$ for some $s\in\{3,4,\dotsc,t\}$.
\end{corollary}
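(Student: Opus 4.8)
The plan is to combine the structural facts about $E$ from Lemma~\ref{lem:E} with the elementary observation that a concave piecewise-linear function on a closed interval is maximized at one of its breakpoints (or an endpoint). First I would note that $E(\delta)-a\delta$ is itself concave and piecewise linear on $(0,1)$, being the sum of the concave piecewise-linear function $E$ and the affine function $-a\delta$. Its breakpoints (the points where the slope changes) can only occur at breakpoints of $E$, and from the closed form of $E$ these are precisely the reciprocals of integers $\delta=1/m$, $m\ge 2$: indeed $E(\delta)$ depends on $\delta$ only through $\delta$ itself and $\lfloor\delta^{-1}\rfloor$, and $\lfloor\delta^{-1}\rfloor$ jumps exactly at $\delta=1/m$.

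Next I would reduce the interval. Since $a\ge 0$, the function $-a\delta$ is non-increasing, and since $E$ is concave with its maximum at $\delta=1/3$ (Lemma~\ref{lem:E}), $E$ is non-increasing on $[1/3,1]$. Hence $E(\delta)-a\delta$ is non-increasing on $[1/3,1]$, so its maximum over $[1/t,1]$ is already attained on $[1/t,1/3]$ (note $1/t\le 1/3$ because $t\ge 3$). It then suffices to show the maximum over $[1/t,1/3]$ is attained at some $\delta=1/s$ with $s\in\{3,4,\dots,t\}$. On this subinterval $E(\delta)-a\delta$ is concave and piecewise linear with breakpoints contained in $\{1/t,1/(t-1),\dots,1/3\}$, and the two endpoints $1/t$ and $1/3$ are of this same form. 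A concave piecewise-linear function on a closed interval attains its maximum at a breakpoint or an endpoint: if a maximizer lay in the relative interior of a linear piece, that piece would be flat, and then its two endpoints — each a breakpoint or an endpoint of the interval — would attain the same value. Therefore the maximum over $[1/t,1/3]$, hence over $[1/t,1]$, is attained at $\delta=1/s$ for some $s\in\{3,4,\dots,t\}$, as claimed.

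The only mildly delicate points are bookkeeping ones: verifying that the slope of $E$ changes only at reciprocals of integers and that $E$ is continuous across those points (both already contained in Lemma~\ref{lem:E}), and checking that the reduction to $[1/t,1/3]$ does not discard the optimum (immediate from concavity of $E$ together with $a\ge 0$). I do not expect any substantive obstacle beyond making these observations precise.
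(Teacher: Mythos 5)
Your proof is correct, and since the paper gives no explicit proof (it declares the corollary "straightforwardly" obtained from Lemma~\ref{lem:E}), your argument is exactly the intended one: restrict to $[1/t,1/3]$ using concavity of $E$, its maximum at $\delta=1/3$, and $a\ge 0$, then invoke the fact that a concave piecewise-linear function is maximized at a breakpoint or endpoint, all of which have the form $1/s$ here. No gaps.
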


\subsection{Grover's search}
Here, Grover's search is briefly introduced without using quantum circuit, unitary oracle, etc.
\begin{theorem}[Grover~\cite{grover1996fast}, Boyer et al.~\cite{boyer1998tight}]\label{thm:grover}
Let $A\colon \{1,2,\dotsc,N\}\to\{0,1\}$ be a bounded-error quantum algorithm with running time $T$.
Then, there is a bounded-error quantum algorithm computing $\bigvee_{x\in\{1,\dotsc,N\}} A(x)$ with running time $\widetilde{O}(\sqrt{N}T)$.
If it is guaranteed that $|A^{-1}(1)|\ge M$ or $|A^{-1}(1)|=0$, then there is a bounded-error quantum algorithm with running time $\widetilde{O}(\sqrt{N/M}T)$.
\end{theorem}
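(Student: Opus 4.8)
This is the standard Grover / Boyer--Brassard--H\o yer--Tapp (BBHT) statement, and the plan is to reduce it to the textbook case in three layers: turn the bounded-error algorithm $A$ into a near-perfect reflection, run amplitude amplification without knowing the number of solutions, and then control the error introduced by the imperfect reflection. For the first layer I would boost $A$: let $A'(x)$ be the majority vote of $r=\Theta(\log N)$ independent runs of $A(x)$, so that $A'$ has error probability at most $N^{-3}$ on every input and running time $\widetilde{O}(T)$. Running $A'$ coherently on a superposition of inputs (with fresh ancilla and work registers), flipping a phase conditioned on the output bit, and then running $A'^{-1}$ to uncompute, yields a unitary $R$ on the input register that lies within operator norm $O(N^{-1})$ of the ideal reflection $R_{\mathrm{ideal}}\colon |x\rangle\mapsto(-1)^{A'(x)}|x\rangle$ about the set $G:=A'^{-1}(1)$ of ``good'' elements, the closeness being a consequence of the fact that on each basis state $|x\rangle$ the uncomputation disentangles the work registers up to an amplitude-$O(N^{-3/2})$ error.

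For the second layer, pretend momentarily that $R=R_{\mathrm{ideal}}$ and write $t:=|G|$. If $t$ were known, $\Theta(\sqrt{N/t})$ Grover iterates --- each consisting of the reflection $R$ followed by the reflection about the uniform superposition over $\{1,\dots,N\}$ --- rotate the initial uniform state to within constant distance of the uniform superposition over $G$, so a measurement returns a good element with constant probability, which we verify with one extra call to $A'$. Since $t$ is unknown, I would wrap this in BBHT exponential search: for $m=1,c,c^2,\dots$ up to $O(\sqrt{N})$ with a fixed constant $c>1$, run Grover with a uniformly random number of iterations in $[0,m)$ and check the output with $A'$, outputting $1$ as soon as a verified good element appears and $0$ if every stage fails. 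The standard analysis gives an expected $O(\sqrt{N})$ total Grover iterates when $t\ge1$, hence running time $\widetilde{O}(\sqrt{N}\,T)$; and under the promise that $t\ge M$ or $t=0$, a single stage with $m=\Theta(\sqrt{N/M})$ already succeeds with constant probability when $t\ge M$ and yields no verified good element when $t=0$, giving $\widetilde{O}(\sqrt{N/M}\,T)$. A constant number of repetitions raises the success probability to $2/3$, and since $\bigvee_{x}A(x)=1$ exactly when $t\ge1$ (each $A'(x)$ matches the intended value $A(x)$ with probability at least $1-N^{-3}$, and the total failure probability is absorbed into the error budget), reporting whether a verified good element was found computes the OR.

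For the third layer, to account for $R\ne R_{\mathrm{ideal}}$: the whole procedure applies at most $O(\sqrt{N})$ Grover iterates, so a hybrid (telescoping) argument bounds the trace distance between the true and the idealized output states by $O(\sqrt{N})\cdot\lVert R-R_{\mathrm{ideal}}\rVert=O(\sqrt{N}\cdot N^{-1})=o(1)$; choosing the constant in $r=\Theta(\log N)$ large enough makes this negligible, and this $\Theta(\log N)$ overhead from the majority vote is precisely what turns $O(\sqrt{N}\,T)$ into $\widetilde{O}(\sqrt{N}\,T)$. I expect this third layer to be the main obstacle: one has to argue carefully that ``run $A$ coherently and uncompute'' genuinely produces a unitary close to the ideal reflection (the garbage left by an imperfect $A$ must be shown to nearly factor out of the state), and that $\Theta(\sqrt{N})$ slightly imperfect reflections do not conspire to blow up the error. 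Boosting $A$ to success probability $1-1/\mathrm{poly}(N)$ before amplification, together with the hybrid argument, resolves both points at the cost of the polylogarithmic factor hidden in $\widetilde{O}$.
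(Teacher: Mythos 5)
The paper does not prove this statement; it is quoted as a known result of Grover and of Boyer--Brassard--H{\o}yer--Tapp, so there is no in-paper proof to compare against. Your sketch is a correct rendering of the standard argument --- amplifying $A$ to error $1/\mathrm{poly}(N)$, running BBHT exponential search (with a single stage of $\Theta(\sqrt{N/M})$ iterations under the promise), and controlling the accumulated error of the imperfect reflection via a hybrid argument --- and it correctly identifies the bounded-error oracle as the delicate point and the source of the polylogarithmic overhead; the only nitpick is that the good set should be taken as $A^{-1}(1)$ rather than ``$A'^{-1}(1)$,'' which is not a well-defined set for a bounded-error algorithm, though this does not affect the analysis.
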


\begin{theorem}[D\"urr and H{\o}yer~\cite{durr1996quantum}]\label{thm:min}
Let $A\colon \{1,2,\dotsc,N\}\to\{1,2,\dotsc,M\}$ be a bounded-error quantum algorithm with running time $T$.
Then, there is a bounded-error quantum algorithm computing $\min_{x\in\{1,\dotsc,N\}} A(x)$ with running time $\widetilde{O}(\sqrt{N}T)$.
\end{theorem}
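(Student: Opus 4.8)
The plan is to prove this by the \emph{D\"urr--H{\o}yer minimum-finding procedure}. Maintain a ``threshold'' index $y$, initialized to a uniformly random element of $\{1,\dots,N\}$, and repeat $L=\Theta(\log N)$ times the following step: run Grover's search (Theorem~\ref{thm:grover}) for an index $x$ such that $(A(x),x)$ precedes $(A(y),y)$ in lexicographic order, and if such an $x$ is returned, set $y\gets x$. Finally output $A(y)$. Since $A(y)$ only ever decreases along the run, the output equals the smallest value the procedure has encountered, so it suffices to argue that after $L$ steps $y$ has, with good probability, reached an index attaining $\min_x A(x)$. Using the lexicographic order makes every element have a well-defined rank even in the presence of ties.

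First I would analyze the idealized case in which every call to $A$ returns the correct value and every Grover search succeeds (when a smaller element exists it returns one, uniformly at random among those). Track the rank $r(y)\in\{1,\dots,N\}$ of $(A(y),y)$ within the sorted order of the pairs $(A(1),1),\dots,(A(N),N)$. Each update strictly decreases $r(y)$, and given that the current rank is $i$ the next visited rank is uniform on $\{1,\dots,i-1\}$; a short downward induction with a partial-fraction telescope then shows that the probability that rank $r$ is ever visited is exactly $1/r$ (indeed $P_r=\tfrac1N+\sum_{i=r+1}^{N}\tfrac1i\cdot\tfrac1{i-1}=\tfrac1N+(\tfrac1r-\tfrac1N)=\tfrac1r$). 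Hence the expected number of updates before reaching rank $1$ is $\sum_{r=1}^{N}\tfrac1r-1<\ln N$, so by Markov's inequality, taking $L=4\ln N$, the threshold reaches an argmin with probability at least $3/4$. Each iteration is a Grover search over $N$ elements with a predicate of cost $O(T)$, hence runs in time $\widetilde{O}(\sqrt{N}\,T)$; summing over the $L=O(\log N)$ iterations and absorbing the logarithm into $\widetilde{O}(\cdot)$ yields total time $\widetilde{O}(\sqrt{N}\,T)$.

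Next I would discharge the idealization. Replacing $A$ by the majority vote of $\Theta(\log N)$ independent invocations gives $A'$ of running time $\widetilde{O}(T)$ and error probability below $1/N^{c}$ for any fixed $c$, and likewise $O(\log N)$-fold repetition of each Grover search drives its failure probability below $1/N^{c}$ while keeping its cost $\widetilde{O}(\sqrt{N}\,T)$. The whole procedure invokes the predicate and the search only $\widetilde{O}(\sqrt{N})$ and $O(\log N)$ times respectively, so a union bound shows that with probability $1-o(1)$ every such subcall behaves correctly, and conditioned on this the idealized analysis applies verbatim. (Because the predicate is applied coherently inside each Grover iteration, it is this inverse-polynomially small error -- not merely a constant one -- that the standard Grover analysis, in which error accrues linearly in the number of oracle calls, requires; the $\Theta(\log N)$ amplification overhead is precisely what distinguishes $\widetilde{O}$ from $O$ here.) Altogether the success probability is at least $3/4-o(1)>2/3$ and the running time is $\widetilde{O}(\sqrt{N}\,T)$, which is the claim.

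The step I expect to be the crux is the probabilistic analysis of the threshold process -- establishing the ``rank $r$ is visited with probability $1/r$'' identity and hence that $\Theta(\log N)$ iterations suffice. (If one wanted the sharper per-run query count $O(\sqrt{N})$ instead of $\widetilde{O}(\sqrt{N})$, one would invoke the refined bound $\widetilde{O}(\sqrt{N/M}\,T)$ of Theorem~\ref{thm:grover} together with $\sum_{r=1}^{N}\tfrac1r\sqrt{N/r}=O(\sqrt{N})$, but that is not needed for the stated $\widetilde{O}$ bound.) The only other delicate point -- feeding a bounded-error predicate coherently into Grover's search -- is handled by the amplification above, and the remaining steps are routine bookkeeping.
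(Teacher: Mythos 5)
The paper does not prove this statement at all --- it is quoted as a known black-box result with a citation to D\"urr and H{\o}yer, so there is no in-paper proof to compare against. Your argument is precisely the standard D\"urr--H{\o}yer threshold/rank analysis from that reference (the $1/r$ visiting-probability telescope, $O(\log N)$ rounds by Markov, plus routine majority-vote amplification to feed a bounded-error predicate into Grover's search), and it is correct as written.
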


\subsection{QRAM}
QRAM is the quantum analogue of RAM which can be accessed in a superposition~\cite{giovannetti2008quantum}.
QRAM has been used in many quantum algorithms~\cite{ambainis2019quantum}.
RAM is the memory that can be accessed in constant or logarithmic time with respect to the memory size.
For computing the minimum of $f(x, W)$ for all $x\in\{1,2,\dotsc,N\}$ where $W$ denotes a read-only RAM,
we can replace RAM with QRAM and apply Grover's search for computing the minimum.
Then, we obtain $\widetilde{O}(\sqrt{N}T)$-time quantum algorithm where $T$ denotes the running time for computing $f$.

\section{Grover's search for branching algorithms}\label{sec:branch}
F\"urer mentioned that Grover's search can be applied to branching algorithms~\cite{furer2008solving}.
Since the details of the quantum algorithm were not explicitly described in~\cite{furer2008solving}, we will show the details in this section.
A branching algorithm is an algorithm which recursively reduce a problem into some problems of smaller parameters.
We now consider decision problems with $\ell$ parameters $n_1,n_2,\dotsc,n_\ell$ that are non-negative integers.
If the parameters are sufficiently small, we do not apply any branching rule and solve this problem in some way.
For a problem $P$ with parameters $n_1,\dotsc,n_\ell$ that are not sufficiently small, we choose a branching rule $b(P)$ such that $P$ is reduced to $m_{b(P)}$ problems $P_1,P_2\dotsc,P_{m_{b(P)}}$ of the same class.
Here, $P_i$ has parameters $f^{b(P),i}_1(n_1),\dotsc,f^{b(P),i}_\ell(n_\ell)$ for some function $f_j^{b(P),i}$ satisfying $f_j^{b(P),i}(n_j)\le n_j$ for $i=1,2,\dotsc,m_{b(P)}$ and $j=1,2,\dotsc,\ell$.
At least one of the parameters of $P_i$ must be smaller than the same parameter of $P$ for all $i\in1,2,\dotsc,m_{b(P)}$.
The solution of $P$ is true if and only if at least one of the solutions of $P_1,\dotsc,P_{m_{b(P)}}$ is true. Hence, we will call this algorithm OR-branching algorithm.
For a problem $P$ of this class, we can consider a \textit{computation tree} that represents the branchings of the reductions.
The computation tree for $P$ is a single node if $P$ has sufficiently small parameters so that any branching rule is not performed,
and is a rooted tree where children of the root node are the root nodes of the computation trees for $P_1,P_2,\dotsc,P_{m_{b(P)}}$ if some branching rule $b(P)$ is applied to $P$.
Let $L(n_1,\dotsc,n_\ell)$ be the maximum number of leaves of the computation tree for $P$ with parameters $n_1,\dotsc,n_\ell$.
Assume that the running time of the computation at a non-leaf node, including computations of $b(P)$, $P_i$, and $f_j^{b(P),i}$, is polynomial with respect to $n_1,\dotsc,n_\ell$.
Then, the total running time of the OR-branching algorithm is at most $\mathrm{poly}(n_1,\dotsc,n_\ell)L(n_1,\dotsc,n_\ell)T$
where $T$ is the running time for the computation at a leaf node.
We can apply Grover's search to OR-branching algorithms if we have an upper bound of $L(n_1,\dotsc,n_\ell)$ with some properties.
\begin{lemma}\label{lem:or}
Let $U(n_1,\dotsc,n_\ell)$ be an upper bound of $L(n_1,\dotsc,n_\ell)$ that can be computed in polynomial time with respect to the parameters, and satisfies
\begin{equation*}
U(n_1,\dotsc,n_\ell)\ge \sum_{i=1}^{m_{b}} U(f_1^{b,i}(n_1),\dotsc,f_\ell^{b,i}(n_\ell))
\end{equation*}
for any branching rule $b$.
Then, there is a bounded-error quantum algorithm with running time $\mathrm{poly}(n_1,\dotsc,n_\ell)\allowbreak\sqrt{U(n_1,\dotsc,n_\ell)}T$.
\end{lemma}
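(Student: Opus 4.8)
The plan is to \emph{flatten} the entire computation tree into a single Grover search over an index set of size $U(n_1,\dots,n_\ell)$, rather than nesting one Grover search per level of the tree. Nesting is the naive idea, and it fails: a branching rule may produce very unbalanced subproblems, so a worst-case $\sqrt{m_b}$ factor per level would multiply up to something far larger than $\sqrt{U(n_1,\dots,n_\ell)}$. The superadditivity hypothesis on $U$ is precisely what makes a global packing of all leaves of the computation tree into $\{0,1,\dots,U(n_1,\dots,n_\ell)-1\}$ possible.

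First I would define, by induction on the computation tree of a problem $P$ with parameter tuple $\vec n=(n_1,\dots,n_\ell)$, a deterministic routing of each index in $\{0,1,\dots,U(\vec n)-1\}$ either to a leaf of the tree or to a \emph{phantom}. If $P$ is a leaf, route index $0$ to that leaf --- legitimate since $U(\vec n)\ge L(\vec n)\ge 1$ --- and route the remaining indices, if any, to phantoms. If a branching rule $b=b(P)$ is applied, producing $P_1,\dots,P_{m_b}$ whose parameter tuples have $U$-values $U_1,\dots,U_{m_b}$, cut the prefix $\{0,\dots,(\sum_i U_i)-1\}$ into consecutive blocks of lengths $U_1,\dots,U_{m_b}$, route block $i$ into the subtree of $P_i$ after shifting it to start at $0$, and route every index in $\{\sum_i U_i,\dots,U(\vec n)-1\}$ to a phantom; this leftover range is well-defined exactly because $\sum_i U_i\le U(\vec n)$. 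An easy induction then shows that every leaf of the computation tree for $P$ is the target of at least one index.

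Next I would take as the Grover oracle the algorithm $A$ that, on input $x$, walks down from the root problem $P$: at each internal node it recomputes $b$, the subproblems $P_i$, their parameter tuples, and the values $U_i$ (all in polynomial time, by the polynomial-time computability of $U$ and of the per-node data), follows the block containing the current index, outputs $0$ if it ever falls into a phantom range, and runs the leaf computation and returns its output upon reaching a leaf. Since each branching rule strictly decreases at least one of the $\ell$ parameters, the walk has depth at most $\sum_j n_j$, which is polynomial, so $A$ is a bounded-error quantum algorithm --- its only error source is the leaf computation --- with running time $\mathrm{poly}(\vec n)+T\le\mathrm{poly}(\vec n)\,T$. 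Applying Grover's search (Theorem~\ref{thm:grover}) over $x\in\{1,\dots,U(\vec n)\}$ computes $\bigvee_x A(x)$ in time $\widetilde O\!\left(\sqrt{U(\vec n)}\,\mathrm{poly}(\vec n)\,T\right)$, and since $U(\vec n)$ is at most exponential in the parameters the hidden polylog factor is itself polynomial, yielding the claimed bound $\mathrm{poly}(\vec n)\sqrt{U(\vec n)}\,T$. Correctness is immediate: $\bigvee_x A(x)$ equals the OR of the leaf computations over all leaves hit by some index, which by construction is all leaves, hence equals the solution of $P$.

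The point I expect to matter most --- the thing one must not get wrong --- is recognizing that the recursion should be collapsed into a \emph{single} search of size exactly $U(\vec n)$ (using superadditivity for the packing) rather than applied level-by-level, and then verifying that the tree-walking oracle is genuinely cheap: that the tree has polynomial depth (from the ``at least one parameter strictly decreases'' condition) and that recomputing the branching data at each visited node costs only polynomial time. A minor point deserving a remark is that Grover's search here runs with a bounded-error oracle, which is already the setting of Theorem~\ref{thm:grover}, with the usual logarithmic blow-up for error reduction absorbed into the $\widetilde O$.
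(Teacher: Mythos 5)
Your proposal is correct and follows essentially the same route as the paper: a single flattened Grover search over $\{1,\dotsc,U(n_1,\dotsc,n_\ell)\}$ with a polynomial-time index-to-leaf routing procedure (the paper's $\textsc{Leaf}(P,s)$), using the superadditivity of $U$ to pack consecutive blocks of sizes $U_1,\dotsc,U_{m_b}$ and an induction on tree depth to show every leaf is hit. The only cosmetic difference is that the paper routes all leftover indices into the last child rather than to explicit phantoms returning $0$; both choices preserve the OR.
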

\begin{proof}
If we can assign an integer $s\in\{1,2,\dotsc,U(n_1,\dotsc,n_\ell)\}$ to every leaf of the computation tree, and can compute the corresponding leaf from given $s$ in polynomial time with respect to the parameters,
then, we can apply Grover's search for computing
\begin{equation*}
a(P) = \bigvee_{Q\in W(P)} a(Q)
\end{equation*}
where $a(P)$ denotes the solution of a problem $P$ and $W(P)$ denotes the set of all problems corresponding to leaves of the computation tree for $P$.
Then, we obtain quantum algorithm with running time $\mathrm{poly}(n_1,\dotsc,n_\ell)\sqrt{U(n_1,\dotsc,n_\ell)}T$~\cite{furer2008solving}.
The algorithm computing $s$-th leaf of a problem $P$ is shown in Algorithm~\ref{alg:leaf}.
We will show the validity of Algorithm~\ref{alg:leaf}.
\begin{algorithm}[t]
\begin{algorithmic}[1]
\Function{Leaf}{$P$, $s$}
\If {$P$ is a leaf} \Return $P$
\EndIf
\State Compute the branching rule $b\gets b(P)$
\For {$i\in\{1,2,\dotsc,m_{b}-1\}$}
\State Compute $P_i$ and its parameters $n'_1,\dotsc,n'_\ell = f_1^{b,i}(n_1),\dotsc,f_\ell^{b,i}(n_\ell)$
\If {$s\le U(n'_1,\dotsc,n'_\ell)$}
\Return $\textsc{Leaf}(P_i, s)$
\Else \quad $s\gets s-U(n'_1,\dotsc,n'_\ell)$
\EndIf
\EndFor
\State \Return $\textsc{Leaf}(P_{m_b}, s)$
\EndFunction
\end{algorithmic}
\caption{Algorithm computing $s$-th leaf of $P$}
\label{alg:leaf}
\end{algorithm}
\begin{proposition}\label{prop:leaf}
For any problem $Q$ that corresponds to a leaf node of the computation tree of a problem $P$ with parameters $n_1,\dotsc,n_\ell$, there exists $s\in\{1,2,\dotsc,U(n_1,\dotsc,n_\ell)\}$ such that $\textsc{Leaf}(P,s)=Q$.
\end{proposition}
\begin{proof}
The proof is an induction on the depth of the computation tree for $P$.
If the computation tree for $P$ consists of a single node, then Algorithm~\ref{alg:leaf} returns $P$.
Assume that the proposition holds for any $P$ with the computation tree of depth at most $d$.
We will consider a problem $P$ with computation tree of depth $d+1$.
Let $i$ be the index of the branching at $P$ that achieves $Q$.
From the induction hypothesis, there exists $s'\in\{1,\dotsc,U(f_1^{b,i}(n_1),\dotsc,f_\ell^{b,i}(n_\ell))\}$ such that $\textsc{Leaf}(P_i, s')=Q$.
Let $s:=s'+\sum_{j=1}^{i-1} U(f_1^{b,j}(n_1),\dotsc,f_\ell^{b,j}(n_\ell))$.
Then, $\textsc{Leaf}(P,s)=Q$.
Here, $s\le \sum_{j=1}^{m_{b}}U(f_1^{b,j}(n_1),\dotsc,f_\ell^{b,j}(n_\ell)) \le U(n_1,\dotsc,n_\ell)$.
\end{proof}
From Proposition~\ref{prop:leaf} and a fact that $\textsc{Leaf}(P,s)$ always returns a problem corresponding to one of the leaf nodes for $P$,
we obtain
\begin{equation*}
a(P) = \bigvee_{s\in\{1,\dotsc,U(n_1,\dotsc,n_\ell)\}} a(\textsc{Leaf}(P,s)).
\end{equation*}
Since the depth of the computation tree for $P$ is at most $\sum_{j=1}^\ell n_j$, the running time of $\textsc{Leaf}(P,s)$ is polynomial with respect to the parameters.
Hence, there is a quantum algorithm computing $a(P)$ with running time $\mathrm{poly}(n_1,\dotsc,n_\ell)\sqrt{U(n_1,\dotsc,n_\ell)}T$.
\end{proof}
For a problem $P$ whose solution is an integer, we can also consider a branching algorithm satisfying $a(P) = \min_{i=1}^{m_{b(P)}} a(P_i)$ for children $P_1,\dotsc,P_{m_{b(p)}}$ of $P$.
In this case, we will call this algorithm MIN-branching algorithm.
Similarly to OR-branching algorithm, we can apply Grover's search to MIN-branching algorithm from Theorem~\ref{thm:min}.

In this paper, we apply Lemma~\ref{lem:or} to Byskov's algorithm in Theorem~\ref{thm:byskov}.
Byskov showed the upper bound $I(n,t)$ satisfying the conditions in Lemma~\ref{lem:or} for the branching algorithm with two parameters $n$ and $t$.
Hence, we can apply Grover's search to Byskov's algorithm in Theorem~\ref{thm:byskov}.
Since $\sum_{t=1}^n I(n,t) = O^*(3^{n/3})$, there is a bounded-error quantum algorithm searching all MISs in time $O^*(3^{n/6})$ as well.

\section{Quantum algorithms for the chromatic number problem}\label{sec:chr}
The overview of the quantum algorithm was described in Section~\ref{subsubsec:algqram}.
The quantum algorithm for Theorem~\ref{thm:main} is shown in Algorithm~\ref{alg:chr}.
\begin{algorithm}[t]
\begin{algorithmic}[1]
\Function{CHR}{$G$}
\If {$G$ is two colorable}
\Return the chromatic number of $G$
\EndIf
\State $\chi[\varnothing]\gets 0$
\For {$S\subseteq V,\,S\ne\varnothing,\,|S|\le \lfloor n/4\rfloor$} (any order consistent with the inclusion relation)
\State $\chi[S] \gets 1 + \min_{I\in\mathrm{MIS}(G[S])}\{\chi[S\setminus I]\}$
\EndFor
\State \Return \textsc{CHR1}($V$)
\EndFunction
\item[]
\Function{CHR1}{$S$}
\State $c\gets |S|$
\For {$t\in\{1,\dotsc,|S|\}$, $s\in\{\max\{\lceil |S|/2\rceil-t,\,1\},\dotsc,\lfloor (|S|-t)/2\rfloor\}$}
  \State $a\gets\min_{I\in\mathrm{MIS}(G[S]),\, |I|=t} \min_{T\subseteq S\setminus I,\, |T|=s} \left(\textsc{CHR2}(T) + \textsc{CHR2}(S\setminus I\setminus T)\right)$
  \State $c\gets\min\{c,a\}$
\EndFor
\State \Return $c+1$
\EndFunction
\item[]
\Function{CHR2}{$S$}
\If {$G[S]$ is two colorable}
\Return the chromatic number of $G[S]$
\EndIf
\State $c\gets |S|$
\For {$t\in\{1,\dotsc,|S|\}$, $s\in\{\max\{\lceil |S|/2\rceil-t,\,1\},\dotsc,\lfloor (|S|-t)/2\rfloor\}$}
  \State $a\gets\min_{I\in\mathrm{MIS}(G[S]),\, |I|=t} \min_{T\subseteq S\setminus I,\, |T|=s} \left(\chi[T] + \chi[S\setminus I\setminus T]\right)$
  \State $c\gets\min\{c,a\}$
\EndFor
\State \Return $c+1$
\EndFunction
\end{algorithmic}
\caption{Algorithm computing the chromatic number of $G$. Grover's search is used for $\min$s.}
\label{alg:chr}
\end{algorithm}
For computing the chromatic number of $G[S]$, when MIS $I$ of size $t$ is chosen, then we have to chose $T\subseteq S\setminus I$ satisfying $|T|\le |S|/2$ and $|S\setminus I\setminus T|\le |S|/2$ as mentioned in Section~\ref{subsubsec:algqram}.
This implies the condition $|S|/2 - t \le |T|\le |S|/2$.
Hence, Algorithm~\ref{alg:chr} computes the chromatic number correctly.
By analyzing the running time of Algorithm~\ref{alg:chr}, we obtain the following theorem.
\begin{theorem}
Algorithm~\ref{alg:chr} computes the chromatic number of $n$-vertex graph with running time
$O^*\bigl((2^{37/35}\allowbreak 3^{3/7}\allowbreak 5^{-9/70}\allowbreak 7^{-5/28})^n\bigr)=O(1.9140^n)$ with bounded error probability.
\end{theorem}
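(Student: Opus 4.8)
The plan is to analyze the running time of Algorithm~\ref{alg:chr} by bounding the three phases separately --- the precomputation filling the table $\chi[\cdot]$, and the two recursive phases $\textsc{CHR1}$ and $\textsc{CHR2}$ --- and then to take the worst case over the split parameters. The key point is that, unlike the crude overview analysis in Section~\ref{subsubsec:algqram}, here every search over MISs is broken up by the \emph{size} $t$ of the MIS, so that Grover's search over $t$-MISs of an $m$-vertex induced subgraph costs only $O^*\!\bigl(\sqrt{I(m,t)}\bigr)=O^*\!\bigl(2^{E(t/m)m/2}\bigr)$ by Theorem~\ref{thm:byskov} and Lemma~\ref{lem:E}, rather than $O^*(3^{m/6})$. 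First I would write the precomputation cost as $\sum_{i=1}^{\lfloor n/4\rfloor}\binom{n}{i}\max_{t}\sqrt{I(i,t)}$; since $\binom{n}{i}\le 2^{nh(i/n)}$ and the inner maximum is $2^{E(\delta)i/2}$ with $\delta=t/i$, Corollary~\ref{cor:I} tells us the relevant maximum over $\delta\in[1/i,1]$ is attained at $\delta=1/s$ for some integer $s\ge 3$; one then optimizes over $i/n\in(0,1/4]$. I expect this phase to be dominated by the recursive phases, so its exact value is not the bottleneck.

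Next I would set up the cost of $\textsc{CHR2}(S)$ with $|S|=m\le n/2$: this is a Grover minimization over $t$-MISs $I$ (cost $\sqrt{I(m,t)}$) and then over subsets $T\subseteq S\setminus I$ of size $s$ (cost $\sqrt{\binom{m-t}{s}}$), followed by two table lookups, where $s$ ranges so that both $|T|$ and $|S\setminus I\setminus T|$ are at most $m/2$. Writing $t=\tau m$, $s=\sigma m$ and using $I(m,t)=2^{E(\tau)m}$, $\binom{m-t}{s}\le 2^{(1-\tau)h(\sigma/(1-\tau))m}$, the exponent of $\textsc{CHR2}$ on $m$ vertices is $\tfrac12\max_{\tau,\sigma}\bigl(E(\tau)+(1-\tau)h(\sigma/(1-\tau))\bigr)$, subject to the constraint $0\le\sigma\le 1/2$ and $\sigma\ge 1/2-\tau$. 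Then $\textsc{CHR1}(V)$ adds an outer layer: Grover over $t$-MISs of $G$ (cost $\sqrt{I(n,t)}=2^{E(\tau)n/2}$), then over $T$ of size $s$ (cost $2^{(1-\tau)h(\sigma/(1-\tau))n/2}$), then two calls to $\textsc{CHR2}$ on graphs of size $\le n/2$. The total running time is the product, so the overall exponent is
\begin{equation*}
\max_{\tau_1,\sigma_1}\max_{\tau_2,\sigma_2}\ \tfrac12\Bigl(E(\tau_1)+(1-\tau_1)h\!\bigl(\tfrac{\sigma_1}{1-\tau_1}\bigr)\Bigr) + \tfrac{\mu}{2}\Bigl(E(\tau_2)+(1-\tau_2)h\!\bigl(\tfrac{\sigma_2}{1-\tau_2}\bigr)\Bigr),
\end{equation*}
where $\mu\le 1/2$ is the (worst-case) relative size of $S$ passed to $\textsc{CHR2}$, and the $\sigma_i$ obey the same two-sided size constraints.

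The main obstacle is the constrained optimization: the function $E(\delta)$ is only piecewise linear, so the maximum of $E(\tau)+(1-\tau)h(\sigma/(1-\tau))$ under the linear constraints $\sigma\le 1/2$, $\sigma\ge 1/2-\tau$ need not occur at an interior critical point, and one must check the breakpoints of $E$ (i.e.\ $\tau=1/s$ for integer $s$, by Corollary~\ref{cor:I}) together with the active-constraint boundaries. The appearance of $2,3,5,7$ in the final constant $2^{37/35}3^{3/7}5^{-9/70}7^{-5/28}$ strongly suggests the optimum is pinned at small-integer values $s\in\{2,3,5,7\}$ for the MIS-size parameters (with $E(1/s)$ a combination of $\log 3,\log 5,\log 7$) and that the binomial terms are evaluated at the constraint boundary $\sigma=1/2-\tau$ rather than at a free Lagrange point; so the plan is to reduce the continuous problem to a finite search over these breakpoints, verify first that $\textsc{CHR1}$ and $\textsc{CHR2}$ dominate the precomputation, confirm the worst case is $\mu=1/2$, and then check by direct substitution that the claimed tuple of $(\tau_1,\sigma_1,\tau_2,\sigma_2)$ attains the value $\tfrac17\log 2 + \tfrac{3}{7}\log 3 - \tfrac{9}{70}\log 5 - \tfrac{5}{28}\log 7 < \log 1.9140$, while all nearby breakpoint combinations give strictly smaller exponents. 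Finally I would note the algorithm uses exponential space for the table $\chi[\cdot]$ stored in QRAM and has bounded error from the $O(\mathrm{poly}(n))$ nested Grover calls (amplified so the union bound is $o(1)$), completing the proof.
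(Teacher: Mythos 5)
This is essentially the paper's own proof: the same three-phase decomposition (precomputation at $O(1.8370^n)$; then $T_2(m)=O^*(80^{m/10})$ from maximizing $\tfrac12\left(E(\delta)+1-\delta\right)$ at $\delta=1/5$; then $T_1$ from maximizing $\tfrac12 E(\delta)+\tfrac12(1-\delta)h(\lambda/(1-\delta))+\lambda\cdot\tfrac{1}{10}\log 80$ at $\delta=1/7$, $\lambda=1/2$, with the CHR2 cost coupled to the size $\lambda n$ of the set handed down), using Corollary~\ref{cor:I} to reduce the MIS-size parameter to integer reciprocals exactly as you propose. Your guessed location of the optimum is slightly off --- the CHR1 binomial term is pinned at the \emph{upper} boundary $\sigma_1=1/2$ rather than at $\sigma=1/2-\tau$, the CHR2 one sits at the interior point $\sigma_2=(1-\tau_2)/2=2/5$, the $3^{3/7}$ comes from the entropy term $h(7/12)$ rather than from an MIS breakpoint at $1/3$, and the coefficient of $\log 2$ in the final exponent is $37/35$ rather than $1/7$ --- but these are precisely the details your planned finite search over breakpoints and boundaries would correct.
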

\begin{proof}
The running time of the precomputation is
$O^*\left(\sum_{i=1}^{\lfloor n/4\rfloor} \binom{n}{i} 3^{i/6}\right)
=O^*\left(2^{h(1/4)n} 3^{n/24}\right)
=O(1.8370^n)$.
Let $T_1(n)$ be the running time of $\textsc{CHR1}(V)$ and $T_2(m)$ be the running time of $\textsc{CHR2}(S)$ for $S\subseteq V$ of size $m$.
Then, we obtain
\begin{align}
T_2(m) &= \sum_{t=1}^m \sqrt{I(m,t)} \sum_{s=\max\{\lceil m/2\rceil - t,\, 1\}}^{\lfloor(m-t)/2\rfloor}\sqrt{\binom{m-t}{s}},\label{eq:t2}\\
T_1(n) &= \sum_{t=1}^n \sqrt{I(n,t)} \sum_{s=\max\{\lceil n/2\rceil - t,\, 1\}}^{\lfloor(n-t)/2\rfloor}\sqrt{\binom{n-t}{s}}\left(T_2(s) + T_2(n-t-s)\right)\nonumber\\
&\le \sum_{t=1}^n \sqrt{I(n,t)} \sum_{s=0}^{\min\{\lfloor n/2\rfloor,\, n-t\}}\sqrt{\binom{n-t}{s}}T_2(s)
\label{eq:t1}
\end{align}
by ignoring polynomial factors in $n$.
Here, $T_2(m) \le \sum_{t=1}^m\sqrt{I(m,t)}2^{\frac{m-t}2}$ whose exponent is equal to
$\max_{\delta\in[0,1]} \left\{(E(\delta) + (1-\delta))/2\right\}$.
From Corollary~\ref{cor:I}, it is sufficient to take maximum among $\delta$ being an inverse integer.
Numerical calculation shows that the maximum is given at $\delta=1/5$ and hence $T_2(m) = O^*(80^{m/10})=O(1.5500^m)$.
Hence, the exponent of $T_1(n)$ is equal to
\begin{align*}
\max_{\delta\in[0,1/3],\, \lambda\in[0,1/2]} \left\{\frac12 E(\delta) + \frac12 h\left(\frac{\lambda}{1-\delta}\right)(1-\delta) + \left(\frac1{10}\log 80\right)\lambda\right\}.
\end{align*}
Here, we only consider maximum for $t\le n/3$ since $I(n,t)$ is decreasing with respect to $t$ for $t\ge n/3$, and since the another part $\sum_s \sqrt{\binom{n-t}{s}}T_2(s)$ in~\eqref{eq:t1} is decreasing with respect to $t$.
Numerical calculation shows that the maximum is given at $\delta=1/7, \lambda=1/2$.
Hence, we obtain
\begin{align*}
T_1(n) &= O^*\left(\left(7^{1/14} 2^{h(7/12)3/7} 80^{1/20}\right)^n\right)\\
&=O^*\left((2^{37/35}3^{3/7}5^{-9/70}7^{-5/28})^n\right)=O(1.9140^n).
\qedhere
\end{align*}
\end{proof}
Careful readers may notice that the running time of the precomputation and the main computation are not balanced.
If the quantum algorithm precomputes the chromatic number of induced subgraphs with size at most $(1/4+\epsilon)n$ for some $\epsilon>0$,
the precomputation and the main computation require more and less running time, respectively (we can use Fact~\ref{fa:bcut} for unbalanced $m$).
By optimizing $\epsilon$ such that the both running time are balanced, we may obtain improved running time.
This idea improved the running time of the quantum algorithm for TSP~\cite{ambainis2019quantum}, but does not improve the running time of Algorithm~\ref{alg:chr}.
Equation~\eqref{eq:t2} is dominated by $t=n/5$ and $s=(2/5)n$. 
Equation~\eqref{eq:t1} is dominated by $t=n/7$ and $s=n/2$. 
In order to exclude $s=(2/5)n$ in the summation in~\eqref{eq:t2}, the chromatic number of induced subgraph with size at most $(3/10)n$ must be precomputed.
However, the running time of the precomputation in this case is $\sum_{i=1}^{(3/10)n}\binom{n}{i}3^{i/6}=\Omega(1.9460^n)$.
Hence, the running time of quantum algorithm is not improved.

\section{Quantum algorithms not using QRAM}\label{sec:kcol}
\subsection{Known classical algorithms for $k$-coloring problem}
Beigel and Eppstein showed the fastest known classical algorithm for the graph 3-coloring problem.
\begin{theorem}[Beigel and Eppstein~\cite{beigel20053}]\label{thm:epp}
There is a classical algorithm for the graph 3-coloring problem with running time
$O^*((2^{3/49}3^{4/49}\Lambda^{24/49})^n)=O(1.3289^n)$ where $\Lambda$ denotes the unique real positive root of $x^5-2x-2$.
\end{theorem}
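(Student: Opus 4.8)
The plan is to prove the bound by encoding $3$-coloring as a constraint satisfaction problem and then designing a Branch \& Reduce algorithm whose worst-case recurrence has characteristic root $\Lambda$. First I would transform an $n$-vertex instance $G=(V,E)$ into a $(3,2)$-CSP: introduce one variable $x_v$ per vertex with domain $\{1,2,3\}$, and for each edge $\{u,v\}\in E$ a binary constraint forbidding $x_u=x_v$. A satisfying assignment is exactly a proper $3$-coloring, so $G$ is $3$-colorable iff this CSP is satisfiable. Crucially, the quantity driving the analysis is not $n$ itself but a weighted count of the surviving variables according to their current domain sizes, with domain-$2$ variables assigned weight strictly less than domain-$3$ variables; the goal is to bound the number of leaves of the search tree by $O((2^{3/49}3^{4/49}\Lambda^{24/49})^n)$.

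Next I would establish the polynomial-time reduction rules applied exhaustively before each branching step: propagate and eliminate any variable whose domain collapses to size $1$; observe that once only domain-$2$ variables remain the residual instance is an instance of $2$-SAT and is solvable in polynomial time; and locally eliminate variables of low constraint-degree (degree $0,1,2$) by folding their constraints into their neighbours. These rules guarantee that whenever the algorithm actually branches, it branches on a domain-$3$ variable $v$ of high enough constraint-degree that every branch makes genuine progress in the measure.

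The heart of the argument is the branching step with its amortized analysis. I would branch on such a $v$: in one branch fix $x_v$ to a chosen value, which through the not-equal constraints shrinks the domains of $v$'s neighbours; in the complementary branch delete that value, turning $v$ into a domain-$2$ variable and again forcing reductions nearby. Cataloguing the local configurations of $v$ and its neighbours yields a finite family of branching vectors, and the worst case is the one whose characteristic equation is $x^5-2x-2=0$ — equivalently $1=2x^{-4}+2x^{-5}$, a branch reducing the measure by $(4,4,5,5)$ — whose unique positive real root is $\Lambda$. To push the constant below the generic $(3,2)$-CSP bound down to $O(1.3289^n)$, I would add a network-flow/matching subroutine that, on the sparse low-degree instances surviving the reductions, optimally selects a large part of the graph that can be extended to a coloring without branching, confining branching to a controlled core. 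The weights $3/49$, $4/49$, $24/49$ in the exponent are exactly the output of tuning the measure so that every branching vector in the case analysis simultaneously satisfies the resulting quasiconvex constraint.

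The main obstacle I expect is precisely this final balancing act. The algorithm splits into a large number of distinct local cases, depending on the constraint-degree of $v$, the domain sizes and mutual adjacencies of its neighbours, and which reduction rules have already fired, and each case produces its own branching vector. Verifying that the single weighted measure encoded by $2^{3/49}3^{4/49}\Lambda^{24/49}$ dominates \emph{all} of these vectors — i.e.\ solving the associated quasiconvex program and checking that no case yields a root worse than $\Lambda$ — is the delicate, computation-heavy step, and it is there, rather than in any one clean recurrence, that the precise value $1.3289$ is pinned down.
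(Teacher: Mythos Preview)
The paper does not prove this theorem: it is quoted with attribution to Beigel and Eppstein and used purely as a black box, both here and (via Lemma~\ref{lem:furer}) in the quantum algorithms of Section~\ref{sec:kcol}. There is therefore no proof in the paper to compare your proposal against.

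That said, your sketch is a recognisable outline of the actual Beigel--Eppstein argument: the $(3,2)$-CSP encoding, the reduction rules that dispose of domain-$1$ and low-degree variables and hand the all-domain-$2$ residue to a polynomial-time solver, the branching whose tightest recurrence has characteristic equation $x^5=2x+2$ (branching vector $(4,4,5,5)$), and the matching/flow subroutine for the $3$-coloring-specific speedup are all genuine ingredients of their paper. One point is slightly off: the exponents $3/49$, $4/49$, $24/49$ do not arise from tuning a single weighted measure across all branching cases in a quasiconvex program. The branching analysis by itself yields the generic $(3,2)$-CSP bound $O(\Lambda^n)\approx O(1.3645^n)$; the sharper $3$-coloring constant comes from a separate combinatorial preprocessing phase, specific to $3$-coloring instances, that cuts down the number of $3$-valued variables before the general CSP solver is invoked, and the product $2^{3/49}3^{4/49}\Lambda^{24/49}$ records the balance between those phases rather than the output of a measure optimisation.
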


For larger $k$, Byskov showed reduction algorithms from the graph $k$-coloring problem to the graph 3-coloring problem~\cite{byskov2004enumerating}.
Since a graph $G$ is $k$-colorable if and only if there exists an MIS $I$ of size at least $\lceil n/k\rceil$ such that $G[V\setminus I]$ is $(k-1)$-colorable,
we obtain the following reduction algorithm.
\begin{ralgo}[Byskov~\cite{byskov2004enumerating}, Lawler\cite{lawler1976note}]\label{ralgo:1}
For each $t\in\{\lceil n/k\rceil, \lceil n/k\rceil +1,\dotsc, n\}$, enumerate all $t$-MISs.
For each $t$-MIS $I$, the algorithm for the graph $(k-1)$-coloring is performed to $G[V\setminus I]$.
\end{ralgo}
Let $T_k^{(1)}(n)$ be the running time of an algorithm for the graph $k$-coloring problem using Reduction algorithm~\ref{ralgo:1}.
Then, it satisfies
\begin{equation}\label{eq:T1}
T_k^{(1)}(n) = \sum_{t=\lceil n/k\rceil}^n I(n,t) T_{k-1}^{(1)}(n-t)
\end{equation}
for $k\ge 4$ by ignoring a polynomial factor.
By using Reduction algorithm~\ref{ralgo:1} and Theorem~\ref{thm:epp}, Byskov obtained algorithms for the graph 4- and 5-coloring problems with running time $O(1.7504^n)$ and $O(2.1592^n)$, respectively.
Byskov also introduced another reduction algorithm.
Here, we introduce it in a general form.
Since a graph $G$ is $k$-colorable if and only if there exists a subset $S$ of vertices of size at least $\lceil n k' / k\rceil$ such that $G[S]$ is $k'$-colorable and $G[V\setminus S]$ is $(k-k')$-colorable for arbitrary $k'< k$,
we obtain the following reduction algorithm.
\begin{ralgo}[Byskov~\cite{byskov2004enumerating}, Lawler\cite{lawler1976note}]\label{ralgo:2}
Fix $k'\in\{2,3,\dotsc,\lfloor k/2\rfloor\}$.
For each $t\in\{\lceil nk'/k\rceil, \lceil nk'/k\rceil + 1, \dotsc,n\}$, enumerate all subsets of vertices of size $t$.
For each subset $S$ of vertices of size $t$, the algorithms for the graph $k'$- and $(k-k')$-coloring are performed to $G[S]$ and $G[V\setminus S]$, respectively.
\end{ralgo}
Let $T_k^{(2)}(n)$ be the running time of an algorithm for the graph $k$-coloring problem using Reduction algorithm~\ref{ralgo:2}.
Then, it satisfies
\begin{equation}\label{eq:T2}
T_k^{(2)}(n) = \sum_{t=\lceil nk'/k\rceil}^n \binom{n}{t} \left(T_{k'}^{(2)}(t) + T_{k-k'}^{(2)}(n-t)\right)
\end{equation}
for $k\ge 4$ by ignoring a polynomial factor.
Reduction algorithm~\ref{ralgo:2} is a simple generalization of a reduction algorithm in~\cite{byskov2004enumerating}.
For $k=6$, Reduction algorithm~\ref{ralgo:2} with $k'=3$ gives $T_6^{(2)}(n)=O(2.3289^n)$ while $T_6^{(1)}(n)=O(2.5602^n)$~\cite{byskov2004enumerating}.

\subsection{Quantum algorithms not using QRAM}
In this section, we present quantum algorithms not using QRAM, and prove a weaker version of Theorem~\ref{thm:maing} that is valid for $k\le 19$ rather than $k\le 20$.
Theorem~\ref{thm:maing} is obtained by improving quantum algorithms presented in this section.
The improved quantum algorithm and the proof of Theorem~\ref{thm:maing} are shown in Appendix~\ref{apx:20alg}.
F\"urer mentioned that Grover's search can be applied to Beigel and Eppstein's algorithm.

\begin{lemma}[F\"urer~\cite{furer2008solving}]\label{lem:furer}
There is a bounded-error quantum algorithm not using QRAM for the graph 3-coloring problem with running time
$O^*((2^{3/49}3^{4/49}\Lambda^{24/49})^{n/2})=O(1.1528^n)$ where $\Lambda$ denotes the unique real positive root of $x^5-2x-2$.
\end{lemma}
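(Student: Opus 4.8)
The plan is to invoke the general machinery of Section~\ref{sec:branch} directly. Beigel and Eppstein's algorithm (Theorem~\ref{thm:epp}) is a branching algorithm (Branch \& Reduce): it repeatedly applies reduction rules and branching rules to a 3-coloring instance until it reaches trivially-decidable base cases, and the instance is 3-colorable if and only if at least one leaf of the computation tree is. This is precisely an OR-branching algorithm in the sense of Section~\ref{sec:branch}, with a single parameter $n$ (the number of vertices, or whatever potential function Beigel--Eppstein use). So the first step is to observe that the analysis of~\cite{beigel20053} already supplies a function $U(n)$ — namely $U(n)=O^*\bigl((2^{3/49}3^{4/49}\Lambda^{24/49})^n\bigr)$ — that upper bounds the number $L(n)$ of leaves of the computation tree, and that this $U$ satisfies the subadditivity condition $U(n)\ge\sum_i U(f_i(n))$ required by Lemma~\ref{lem:or}, since that inequality is exactly the recurrence that the branching-factor analysis verifies in order to conclude the running-time bound.

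Granting that, the second step is a direct application of Lemma~\ref{lem:or}: there is a bounded-error quantum algorithm for 3-colorability with running time $\mathrm{poly}(n)\sqrt{U(n)}\,T$, where $T=\mathrm{poly}(n)$ is the cost of evaluating 3-colorability at a leaf (a trivial instance). Taking square roots gives running time $O^*\bigl((2^{3/49}3^{4/49}\Lambda^{24/49})^{n/2}\bigr)$, and rounding $\sqrt{2^{3/49}3^{4/49}\Lambda^{24/49}}=\sqrt{1.3289\ldots}$ up to four digits yields $O(1.1528^n)$, which is the claimed bound. Since Beigel--Eppstein's algorithm uses only polynomial space and Lemma~\ref{lem:or} preserves this (the \textsc{Leaf} procedure of Algorithm~\ref{alg:leaf} only walks down a single root-to-leaf path, storing a polynomial amount of data, and Grover's search over the leaf index adds no space overhead), the resulting quantum algorithm also uses only polynomial space and no QRAM.

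The one point requiring care — and the only real obstacle — is checking that Beigel and Eppstein's algorithm genuinely fits the OR-branching template of Section~\ref{sec:branch}, in particular the hypotheses of Lemma~\ref{lem:or}: the branching rule $b(P)$ at each node, the children $P_i$, and their parameters must all be computable in polynomial time, and the leaf-counting bound $U$ must be polynomial-time computable and satisfy the stated inequality for \emph{every} branching rule used. Beigel--Eppstein's algorithm is somewhat intricate (it uses a nonstandard measure and a large case analysis), so one must confirm that each reduction/branching step is poly-time and that the published worst-case leaf bound dominates the per-rule sums; but this is already implicit in their running-time proof, so no genuinely new work is needed. A minor secondary point is that reduction rules (which do not branch) should be folded into the branching framework as ``branchings'' with a single child whose parameter is strictly smaller, so that the computation tree has polynomial depth as required by Lemma~\ref{lem:or}; this is routine. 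I would therefore present the proof as: (i) Beigel--Eppstein's algorithm is an OR-branching algorithm with leaf bound $U(n)=O^*((2^{3/49}3^{4/49}\Lambda^{24/49})^n)$ satisfying the hypotheses of Lemma~\ref{lem:or}; (ii) apply Lemma~\ref{lem:or}; (iii) read off the running time and note polynomial space.
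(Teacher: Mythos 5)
Your proposal is correct and takes essentially the same approach as the paper: the paper's entire justification of Lemma~\ref{lem:furer} is the one-sentence remark that it is obtained by applying Lemma~\ref{lem:or} to Beigel and Eppstein's classical algorithm (Theorem~\ref{thm:epp}), which is exactly your steps (i)--(iii). Your additional caveats about verifying that the Beigel--Eppstein leaf bound satisfies the hypotheses of Lemma~\ref{lem:or} and about folding non-branching reduction rules into the framework are sensible elaborations of what the paper leaves implicit.
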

The quantum algorithm in Lemma~\ref{lem:furer} is obtained by application of Lemma~\ref{lem:or} to classical Beigel and Eppstein's algorithm in Theorem~\ref{thm:epp}.
On the other hand, Beigel and Eppstein also showed a very simple randomized algorithm for the graph 3-coloring problem with running time $O^*(2^{n/2})$~\cite[Corollary 1]{beigel20053}.
This randomized algorithm searches one of the $2^{n/2}$ solutions from $2^n$ leaves in a quaternary computation tree of depth $n/2$.
From Theorem~\ref{thm:grover}, we can apply Grover's search to the randomized algorithm and obtain a quantum algorithm with running time $O^*(2^{n/4})$.
The quantum algorithms for Theorem~\ref{thm:maing} reduce the graph $k$-coloring problems to the graph 3-coloring problem.
Theorem~\ref{thm:maing} can be obtained even if this simpler quantum algorithm is used in place of the involved quantum algorithm in Lemma~\ref{lem:furer} although the exponents increase.
By applying Reduction algorithms~\ref{ralgo:1} and \ref{ralgo:2} with Grover's search, the following Theorem is obtained.

\begin{theorem}\label{thm:19}
Assume that there is a polynomial-space bounded-error quantum algorithm not using QRAM for the graph 3-coloring problem with running time $O^*(2^{f^*_3 n})$ for some $f_3^*$.
Then, there is a polynomial-space bounded-error quantum algorithm not using QRAM for the graph $k$-coloring problem with running time $O^*(2^{f^*_k n})$ where
\begin{equation*}
f^*_k := \min
\begin{cases}
\max_{s\in\{3,4,\dotsc,k\}} \{(\log s)/(2s) + (1-1/s)f^*_{k-1}\},\\
\min_{2\le k'\le \lfloor k/2\rfloor}\max_{\delta\in[k'/k, 1]} \left\{h(\delta)/2 + \max\{\delta f^*_{k'},\, (1-\delta)f^*_{k-k'}\}\right\}
\end{cases}
\end{equation*}
for $k\ge 4$.
\end{theorem}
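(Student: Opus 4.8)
The plan is to analyze the two reduction algorithms separately, bound their quantum running times by applying Grover's search, and then take the minimum. For the first branch, we apply Reduction algorithm~\ref{ralgo:1}: the quantum algorithm enumerates all $t$-MISs for $t \in \{\lceil n/k\rceil,\dotsc,n\}$, and for each $t$-MIS $I$ recursively runs the quantum $(k-1)$-coloring algorithm on $G[V\setminus I]$. Using Lemma~\ref{lem:or} applied to Byskov's enumeration algorithm (Theorem~\ref{thm:byskov}), searching over all $t$-MISs costs $\widetilde{O}(\sqrt{I(n,t)})$ rather than $O^*(I(n,t))$; combined with the Grover-type OR over the choice of $t$ (a polynomial range) and the recursive call, the running time $T_k^{(1),*}(n)$ satisfies, up to polynomial factors,
\begin{equation*}
T_k^{(1),*}(n) = \sum_{t=\lceil n/k\rceil}^n \sqrt{I(n,t)}\; T_{k-1}^*(n-t).
\end{equation*}
Writing $t = \delta n$ and $T_{k-1}^*(m) = O^*(2^{f^*_{k-1}m})$, the exponent of this sum is $\max_{\delta\in[1/k,1]}\{E(\delta)/2 + (1-\delta)f^*_{k-1}\}$ by Lemma~\ref{lem:E}, and by Corollary~\ref{cor:I} the maximum is attained at $\delta = 1/s$ for some $s\in\{3,4,\dotsc,k\}$. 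Evaluating $E(1/s) = (\log s)/s$ from the definition of $E$ (since at $\delta = 1/s$ we have $\lfloor\delta^{-1}\rfloor = s$, and the second term vanishes) gives the first expression $\max_{s}\{(\log s)/(2s) + (1-1/s)f^*_{k-1}\}$. I would need to be slightly careful at the boundary $\delta = 1/k$, but since $E$ is maximized at $\delta = 1/3 > 1/k$ and the analysis restricts to inverse integers $\ge 1/k$, this is handled by Corollary~\ref{cor:I} directly.

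For the second branch, we apply Reduction algorithm~\ref{ralgo:2} with a fixed $k' \in \{2,\dotsc,\lfloor k/2\rfloor\}$: the quantum algorithm uses Grover's search over all subsets $S \subseteq V$ of size $t \in \{\lceil nk'/k\rceil,\dotsc,n\}$, and for each such $S$ recursively solves $k'$-coloring on $G[S]$ and $(k-k')$-coloring on $G[V\setminus S]$. Searching over $\binom{n}{t}$ subsets costs $\widetilde{O}(\sqrt{\binom{n}{t}})$ by Theorem~\ref{thm:grover}, and the per-subset cost is $T_{k'}^*(t) + T_{k-k'}^*(n-t)$, which up to polynomial factors is $\max\{T_{k'}^*(t), T_{k-k'}^*(n-t)\}$. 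Hence
\begin{equation*}
T_k^{(2),*}(n) = \sum_{t=\lceil nk'/k\rceil}^n \sqrt{\binom{n}{t}}\,\bigl(T_{k'}^*(t) + T_{k-k'}^*(n-t)\bigr),
\end{equation*}
and with $t = \delta n$ the exponent is $\max_{\delta\in[k'/k,1]}\{h(\delta)/2 + \max\{\delta f^*_{k'}, (1-\delta)f^*_{k-k'}\}\}$ using $\binom{n}{\delta n} = O^*(2^{h(\delta)n})$. Minimizing over the choice of $k'$ gives the second expression. One subtlety is that the recursion must bottom out correctly: for any $k$, repeated application eventually reaches $k=3$ (handled by the hypothesized $O^*(2^{f^*_3 n})$ algorithm) or $k\le 2$ (trivial), so $f^*_k$ is well-defined by induction on $k$; I would state this as the base of the recursion before running the two cases.

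Finally, since the algorithm is free to choose whichever reduction is faster, $f^*_k$ is the minimum of the two exponents, yielding the stated formula. The main obstacle I anticipate is the bookkeeping around Grover's search with an \emph{unknown number of solutions} and \emph{bounded-error subroutines}: the recursive calls $T_{k'}^*$ and $T_{k-k'}^*$ are themselves bounded-error quantum algorithms, so each level of recursion must amplify success probability (incurring only polynomial overhead, absorbed into $O^*$), and Theorem~\ref{thm:grover} / Theorem~\ref{thm:min} must be invoked in their bounded-error-oracle form; I would cite these theorems and note that the recursion depth is at most $k = O(1)$, so the accumulated polynomial factors remain polynomial. A secondary point requiring care is verifying that the summations over $t$ (polynomially many terms) contribute only a polynomial factor, which is immediate, and that the max over $s$ or $\delta$ in the exponent is legitimate because each summand's exponent is dominated by the maximizing term up to the same polynomial slack.
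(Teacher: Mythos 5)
Your proposal is correct and follows essentially the same route as the paper's proof: apply Lemma~\ref{lem:or} to Byskov's $t$-MIS enumeration for Reduction algorithm~\ref{ralgo:1} and plain Grover's search over subsets for Reduction algorithm~\ref{ralgo:2}, derive the two recurrences for $T_k^*(n)$, convert to exponents via Lemma~\ref{lem:E} and $\binom{n}{\delta n}=O^*(2^{h(\delta)n})$, and invoke Corollary~\ref{cor:I} to restrict the first maximization to $\delta=1/s$ with $E(1/s)=(\log s)/s$. Your additional remarks on error amplification across the constant-depth recursion and on the base case are sound and only make explicit what the paper leaves implicit.
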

\begin{proof}
We consider quantum algorithm using Reduction algorithms~\ref{ralgo:1} and \ref{ralgo:2} with Grover's search.
Since we can apply Lemma~\ref{lem:or} to Byskov's enumeration algorithm of $t$-MIS in Reduction algorithm~\ref{ralgo:1}, we obtain
\begin{equation*}
T_k^*(n) \le \sum_{t=\lceil n/k\rceil}^n \sqrt{I(n,t)} T_{k-1}^*(n-t).
\end{equation*}
For Reduction algorithm~\ref{ralgo:2}, we can simply apply Grover's search, and obtain
\begin{equation*}
T_k^*(n) \le \sum_{t=\lceil nk'/k\rceil}^n \sqrt{\binom{n}{t}} \left(T_{k'}^*(t) + T_{k-k'}^*(n-t)\right)
\end{equation*}
for any $k'\in\{2,\dotsc,\lfloor k/2\rfloor\}$.
Hence, by choosing the best reduction algorithm, we obtain quantum algorithms with running time $O^*(2^{F^*_k n})$ where
$F^*_3=f^*_3$ and
\begin{equation*}
F^*_k := \min
\begin{cases}
\max_{\delta\in[1/k, 1]} \left\{E(\delta)/2 + (1-\delta)F^*_{k-1}\right\},\\
\min_{2\le k'\le \lfloor k/2\rfloor}\max_{\delta\in[k'/k, 1]} \left\{h(\delta)/2 + \max\{\delta F^*_{k'},\, (1-\delta)F^*_{k-k'}\}\right\}
\end{cases}
\end{equation*}
for $k\ge 4$.
Since $E(\delta)$ is decreasing for $\delta \ge 1/3$, we can assume that $\delta\le 1/3$.
From Corollary~\ref{cor:I}, it is sufficient to take maximum among $\delta=1/s$ for $s\in\{3,4,\dotsc,k\}$.
This proves $F^*_k = f^*_k$.
\end{proof}
By using Theorem~\ref{thm:19} with $f^*_3 = (3 + 4\log 3 + 24 \log \Lambda)/98\le 0.2051$ or $f^*_3 = 1/4$, we obtain Theorem~\ref{thm:maing} for $k\le 19$.
In Table~\ref{tbl:19} in Appendix~\ref{apx:19}, the values of $f^*_k$ and best choices of $k'$ are shown for $f^*_3 = (3 + 4\log 3 + 24 \log \Lambda)/98$.
We summarize the quantum algorithm in Algorithm~\ref{alg:19}.
\begin{algorithm}[t]
\begin{algorithmic}[1]
\Function{COL}{$G$, $k$}
\If {$k\le 2$} \Return the $k$-colorability of $G$ by a polynomial-time algorithm
\EndIf
\If {$k = 3$} \Return the 3-colorability of $G$ by Beigel and Eppstein's algorithm with Grover's search
\EndIf
\If {$k\le 5$}
  \For {$t\in\{\lceil n/k\rceil,\dotsc,n\}$}
    \If {$\bigvee_{I\in\mathrm{MIS}(G), |I|=t} \textsc{COL}(G[V\setminus I], k-1)$} \Return true
    \EndIf
  \EndFor
  \State \Return false
\EndIf
\State Choose $k'$ depending on $k$ from Table~\ref{tbl:19}
\For {$t\in\{\lceil nk'/k\rceil,\dotsc,n\}$}
  \If {$\bigvee_{S\subseteq V, |S| = t} \textsc{COL}(G[S], k') \wedge \textsc{COL}(G[V\setminus S], k-k')$}
    \Return true
  \EndIf
\EndFor
\State \Return false
\EndFunction
\end{algorithmic}
\caption{Algorithm for the $k$-colorability of $G$. Grover's search is used for two $\bigvee$s.}
\label{alg:19}
\end{algorithm}
It is easy to calculate the exponents $f^*_k$ efficiently and precisely.
The details are explained in Appendix~\ref{apx:19}.
Note that if Grover's search are not used in Algorithm~\ref{alg:19}, we obtain classical algorithms with running time $O^*(4^{f^*_kn})$.
By improving Algorithm~\ref{alg:19}, we obtain Theorem~\ref{thm:maing}.
When Reduction algorithm~\ref{ralgo:2} is applied, we can assume that $k-k'$ independent sets in $G[V\setminus S]$ are smaller than $k'$ independent sets in $G[S]$. 
Hence, we can use the average size $|S|/k'$ of $k'$ independent sets in $G[S]$ as an upper bound of independent sets in $G[V\setminus S]$.
This idea reduces the running time of Algorithm~\ref{alg:19} and gives Theorem~\ref{thm:maing}.
The details are shown in Appendix~\ref{apx:20alg}.

\section*{Acknowledgment}
This work was supported by JST PRESTO Grant Number JPMJPR1867 and JSPS KAKENHI Grant Numbers JP17K17711 and JP18H04090.
The authors are grateful to Fran\c{c}ois Le~Gall for helpful discussions.

\bibliographystyle{plain}
\bibliography{biblio}

\appendix
\section{Calculations of the exponents $f^*_k$}\label{apx:19}
\begin{table}[t]
\centering
\caption{Precise values of $f^*_k$.}\label{tbl:19}
\begin{tabular}{|c|c|c|c|}
\hline
$k$& $f^*_k$& $2^{f^*_k}$ & $k'$\\
\hline
3& 0.2050919796& 1.1527598391&\\
4& 0.4038189847& 1.3230054317& 1\\
5& 0.5552479972& 1.4694212030& 1\\
6& 0.6098104848& 1.5260587298& 3\\
7& 0.7233677736& 1.6510316464& 3\\
8& 0.7298058730& 1.6584159226& 4\\
9& 0.8040091395& 1.7459462428& 4\\
10& 0.8297793332& 1.7774134780& 5\\
11& 0.8297793332& 1.7774134780& 5\\
\hline
\end{tabular}
\begin{tabular}{|c|c|c|c|}
\hline
$k$& $f^*_k$& $2^{f^*_k}$ & $k'$\\
\hline
12& 0.8675130685& 1.8245150716& 6\\
13& 0.8873694503& 1.8498001987& 6\\
14& 0.9096459955& 1.8785844800& 6\\
15& 0.9487955413& 1.9302604739& 7\\
16& 0.9487955413& 1.9302604739& 7\\
17& 0.9535113456& 1.9365803294& 8\\
18& 0.9565265484& 1.9406319746& 8\\
19& 0.9713689548& 1.9607001959& 8\\
20& 1.0059831384& 2.0083116140& 8\\
\hline
\end{tabular}
\end{table}
Here, we consider calculations of the exponents $f^*_k$.
This problem was not dealt in~\cite{ambainis2019quantum}.
Fortunately, the calculation of $f^*_k$ is not difficult.
Non-trivial part is the calculation of
\begin{equation*}
\max\left\{\max_{\delta\in[k'/k, 1]} \{h(\delta)/2 + \delta f^*_{k'}\},\, \max_{\delta\in[k'/k, 1]}\{h(\delta)/2 + (1-\delta)f^*_{k-k'}\}\right\}
\end{equation*}
Here, the binary entropy function plus a linear function is a concave function.
Hence, it is sufficient to find a stationary point $\delta^*$.
For maximizing $h(\delta)/2+\delta f^*_{k'}$, we need to find $\delta^*\in[0, 1]$ such that
\begin{equation*}
\frac12\log\frac{1-\delta^*}{\delta^*} + f^*_{k'} = 0.
\end{equation*}
Since the left-hand side is monotonically decreasing, $\delta^*$ can be approximated efficiently by the binary search.
If $\delta^*< k'/k$, then, $\delta=k'/k$ gives the maximum.
The precise values of $f^*_k$ and chosen $k'$ are shown in Table~\ref{tbl:19}.
Here, $k'=1$ means that Reduction algorithm~\ref{ralgo:1} is chosen.

\section{Improved quantum algorithm for $k$-coloring problems not using QRAM}\label{apx:20alg}
\begin{theorem}\label{thm:20}
Assume that there is a polynomial-space bounded-error quantum algorithm not using QRAM for the graph 3-coloring problem with running time $O^*(2^{f^*_3 n})$ for some $f_3^*$.
Then, there is a polynomial-space bounded-error quantum algorithm not using QRAM for the graph $k$-coloring problem with running time $O^*(2^{d^*_k(1) n})$ where
$d^*_3(\mu):=f_3^*$ for all $\mu\in[1/3,1]$ and
\begin{equation*}
d^*_k(\mu) := \min
\begin{cases}
\max_{s\in\{2,3,\dotsc,k\}} \bigl\{(\log s)/(2s) + (1-1/s)d^*_{k-1}(1)\bigr\},\\
\min_{2\le k'\le \lfloor k/2\rfloor}\max_{\delta\in[k'/k, \min\{1,\mu k'\}]} \bigl\{h(\delta)/2 \\
\qquad + \max\{\delta d^*_{k'}(\min\{1,\,\mu/\delta\}),\, (1-\delta)d^*_{k-k'}(\min\{1,\,\delta/(k'(1-\delta))\})\}\bigr\}
\end{cases}
\end{equation*}
for $k\ge 4$ and $\mu\in[1/k,1]$.
\end{theorem}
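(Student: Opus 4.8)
The plan is to mirror the proof of Theorem~\ref{thm:19}, tracking an extra parameter $\mu$ that records an upper bound on the \emph{density} of colors available to the current subproblem, i.e.\ the guarantee that the graph on $m$ vertices need only be searched for colorings that use independent sets of average size at most $\mu m / k$ (equivalently, that the target $k$-coloring can be assumed to have all color classes of size at most $\mu m$). Concretely I would define, for a problem on $m$ vertices with density parameter $\mu$, the quantity $T^*_{k,\mu}(m)$ as the quantum running time of the recursive algorithm, and establish the recurrences exactly as before but with the summation and the recursive calls constrained by $\mu$. For Reduction algorithm~\ref{ralgo:2} applied with a split into $k'$ and $k-k'$ colors: when we guess $S$ of size $t=\delta m$, the part $G[S]$ must be $k'$-colored and its classes have total size $t$, so relative to $k'$ its density bound becomes $\min\{1,\mu m/t\} = \min\{1,\mu/\delta\}$; symmetrically, since we may assume the $k-k'$ classes in $G[V\setminus S]$ are each no larger than the average class size $t/k'$ in $G[S]$, the density bound for $G[V\setminus S]$ on its $(1-\delta)m$ vertices is $\min\{1,(t/k')/((1-\delta)m)\}=\min\{1,\delta/(k'(1-\delta))\}$. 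Also the range of $\delta$ is now $[k'/k,\min\{1,\mu k'\}]$, the upper endpoint coming from the fact that if all $k'$ classes of $G[S]$ have size $\le \mu m$ then $t\le \mu k' m$. For Reduction algorithm~\ref{ralgo:1} (the $k'=1$ case) the density bound plays no role in the stated formula beyond allowing $s=2$, so that branch reads as the $\mu$-independent maximum over $s\in\{2,\dots,k\}$ of $(\log s)/(2s)+(1-1/s)d^*_{k-1}(1)$.

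The key steps, in order: (1) State precisely the density invariant and argue its correctness, namely that a $k$-colorable graph admits a $k$-coloring whose color classes, sorted in decreasing order, have the largest one of size $\ge m/k$; hence in Reduction algorithm~\ref{ralgo:2} it suffices to consider $t\ge\lceil nk'/k\rceil$ on the $S$-side, and the classes of $G[V\setminus S]$ may be taken no larger than the average size $|S|/k'$ of the classes of $G[S]$ (this is exactly the averaging remark at the end of Section~\ref{sec:kcol}). (2) Write down the exact recurrences for $T^*_{k,\mu}(m)$ using Lemma~\ref{lem:or} for the $t$-MIS enumeration in Reduction algorithm~\ref{ralgo:1} and plain Grover's search (Theorem~\ref{thm:grover}) for the subset search in Reduction algorithm~\ref{ralgo:2}, taking the minimum over the two reductions and over $k'$. (3) Pass to exponents: set $T^*_{k,\mu}(m)=O^*(2^{D^*_k(\mu)m})$ and show the sum is dominated by its largest term, turning each recurrence into the stated $\max$ over $\delta$; here use that $\sqrt{\binom{m}{\delta m}}=O^*(2^{h(\delta)m/2})$ and $\sqrt{I(m,\delta m)}=O^*(2^{E(\delta)m/2})$, and that $E(\delta)$ is decreasing for $\delta\ge 1/3$ so by Corollary~\ref{cor:I} the maximum over $\delta=1/s$, $s\in\{2,\dots,k\}$, suffices for the first branch. (4) Conclude $D^*_k(\mu)=d^*_k(\mu)$ by induction on $k$, with the base case $d^*_3(\mu)=f^*_3$ for all $\mu$ (the $3$-coloring algorithm ignores $\mu$), and read off the overall running time $O^*(2^{d^*_k(1)n})$ by taking $\mu=1$ at the top level, where the density bound is vacuous.

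I expect the main obstacle to be the careful bookkeeping of \emph{which} density bound each recursive call inherits and verifying that the $\min$ with $1$ in each argument is both correct (one never needs a density bound exceeding $1$) and consistent with the domain $[1/k,1]$ of the second argument of $d^*_\bullet$. In particular one must check that $\min\{1,\mu/\delta\}\ge 1/k'$ whenever $\delta$ is in its allowed range $[k'/k,\min\{1,\mu k'\}]$, and likewise that $\min\{1,\delta/(k'(1-\delta))\}\ge 1/(k-k')$ on that range, so that the recursive invocations $d^*_{k'}(\cdot)$ and $d^*_{k-k'}(\cdot)$ are well-defined; these are elementary inequalities that follow from $\delta\ge k'/k$. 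A secondary, purely routine point is justifying that restricting $\delta$ to a finite grid (inverse integers) in the first branch loses nothing, which is immediate from Corollary~\ref{cor:I} since the objective there is $E(\delta)/2$ plus a term linear in $\delta$. Everything else is a direct adaptation of the proof of Theorem~\ref{thm:19}, so once the invariant is pinned down the argument goes through verbatim.
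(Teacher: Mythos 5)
Your proposal is correct and follows essentially the same route as the paper: you track the same size restriction on color classes (the paper carries it as an absolute bound $u$ in $T_k^*(n,u)$ and sets $u=\lfloor\mu n\rfloor$ at the end, you carry the relative density $\mu$ directly), derive the same two recurrences with the same averaging observation $|S|/k'$ bounding the classes of $G[V\setminus S]$ and the same range $t\in[\lceil nk'/k\rceil,\min\{n,uk'\}]$, and pass to exponents in the same way. The well-definedness checks you flag ($\min\{1,\mu/\delta\}\ge 1/k'$ and $\min\{1,\delta/(k'(1-\delta))\}\ge 1/(k-k')$) indeed follow from $\delta\ge k'/k$ and $\delta\le\mu k'$ as you say, so nothing is missing.
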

\begin{proof}
The quantum algorithms are almost same as those in Theorem~\ref{thm:19}.
When we apply Reduction algorithm~\ref{ralgo:2}, we check for all subsets $S$ of vertices of size $t$ whether $G[S]$ is $k'$-colorable and $G[V\setminus S]$ is $(k-k')$-colorable.
Since we assume that $S$ consists of $k'$ largest independent sets in a coloring, we can safely assume that $|S|\ge \lceil nk'/k\rceil$.
At the same time, we can assume that $k-k'$ independent sets in a coloring of $G[V\setminus S]$ have size at most $\lfloor t/k'\rfloor$, which is the average size of independent sets in a coloring of $G[S]$.
This knowledge can be used for reducing the running time of quantum algorithms.
We can consider a partial function that outputs true if the given graph can be partitioned into $k$ independent sets of size at most $u$, outputs false if the given graph is not $k$-colorable,
and outputs either of true or false for other cases.
For computing this partial function, we can restrict the size of $S$ in Reduction algorithm~\ref{ralgo:2}.
Since we only have to consider independent sets of size at most $u$, we can assume that $|S|$ is at most $uk'$ and at least $n-u(k-k')$.
We can assume that $u$ is at least $\lceil n/k\rceil$ since otherwise there is no solution.
Then, $\lceil nk'/k\rceil \ge n-u(k-k')$.
Hence, we can assume that $|S|$ is at least $\lceil nk'/k\rceil$ and at most $\min\{n,uk'\}$.
Let $T_k^*(n, u)$ denote the running time of the quantum algorithm.
Then, we obtain
\begin{align*}
T_k^*(n, u) &\le \sum_{t=\lceil n/k\rceil}^n \sqrt{I(n,t)} T_{k-1}^*(n-t, n-t),\\
T_k^*(n, u) &\le \sum_{t=\lceil nk'/k\rceil}^{\min\{n,u k'\}} \sqrt{\binom{n}{t}} \left(T_{k'}^*(t, \min\{t, u\}) + T_{k-k'}^*(n-t, \min\{n-t,\lfloor t/k'\rfloor\})\right)
\end{align*}
for any $k'\in\{2,3,\dotsc,k\}$.
Note that in Reduction algorithm~\ref{ralgo:1}, we cannot assume that enumerated MIS has size at most $u$ since size of MIS is not restricted.
On the other hand, $T_{k-1}^*(n-t,n-t)$ in the first inequality can be replaced by $T_{k-1}^*(n-t, \min\{t,u\})$.
However, we do not apply this improvement since numerical calculation show that this does not improve the exponents of running time for $k\in\{3,4,\dotsc,21\}\setminus\{13\}$.
By choosing the best reduction algorithms, we obtain $T^*_k(n,\lfloor\mu n\rfloor)=O^*(2^{d^*_k(\mu)n})$.
\end{proof}
By using Theorem~\ref{thm:20} with $f^*_3 = (3 + 4\log 3 + 24 \log \Lambda)/98 \text{ or } 1/4$, we obtain Theorem~\ref{thm:maing}.
Table~\ref{table:20} shows the values of $d^*_k(1)$.
Here, $k'$ may depend on $u$.
However, even if $k'$ is determined only by $k$, the same exponents are obtained.
Chosen $k'$s are the same as those in Table~\ref{tbl:19} except for $k'=7$ for $k=17$.
We summarize the quantum algorithm in Algorithm~\ref{alg:20}.
\begin{algorithm}[t]
\begin{algorithmic}[1]
\Function{COL}{$G$, $k$, $u$}
\If {$k\le 2$} \Return the $k$-colorability of $G$ by a polynomial-time algorithm
\EndIf
\If {$k = 3$} \Return the 3-colorability of $G$ by Beigel and Eppstein's algorithm with Grover's search
\EndIf
\If {$k\le 5$}
  \For {$t\in\{\lceil n/k\rceil,\dotsc,n\}$}
    \If {$\bigvee_{I\in\mathrm{MIS}(G), |I|=t} \textsc{COL}(G[V\setminus I], k-1, |V\setminus I|)$} \Return true
    \EndIf
  \EndFor
  \State \Return false
\EndIf
\State Choose $k'$ that depends on $k$ from Table~\ref{tbl:19}, but $k'=7$ for $k=17$
\For {$t\in\{\lceil nk'/k\rceil,\dotsc,\min\{n,uk'\}\}$}
  \If {$\bigvee_{S\subseteq V, |S| = t} \textsc{COL}(G[S], k', u) \wedge \textsc{COL}(G[V\setminus S], k-k', \lfloor t/k'\rfloor)$}
    \Return true
  \EndIf
\EndFor
\State \Return false
\EndFunction
\end{algorithmic}
\caption{Algorithm for the $k$-colorability of $G$. If $G$ can be partitioned into $k$ independent sets of size at most $u$, then return true. If $G$ is not $k$-colorable, then return false. Otherwise, return either of true or false. Grover's search is used for two $\bigvee$s}
\label{alg:20}
\end{algorithm}
The details of numerical calculation of $d^*_k(1)$ and their precise values are shown in Appendix~\ref{apx:20}.
Finally, we introduce our ideas that failed to improve the running time.
Similarly to Theorem~\ref{thm:20}, we can assume that in Reduction algorithm~\ref{ralgo:2}, size of independent sets in a coloring of $G[S]$ is lower bounded by $(n-t)/(k-k')$, which is the average size of
independent sets in $G[V\setminus S]$.
However, this idea could not improve the exponents in our numerical calculations.
We also tried to use hybrid algorithms of Reduction algorithms~\ref{ralgo:1} and \ref{ralgo:2}.
We introduce a threshold $s$ of size of independent sets. Then, Reduction algorithms~\ref{ralgo:1} and \ref{ralgo:2} are both applied on the assumptions that the size of the largest independent set in a coloring is at least $s+1$ and at most $s$, respectively.
The threshold $s$ is optimized so that running time of Reduction algorithms are balanced.
This algorithm failed to improve the running time as well.

\section{Calculations of the exponents $d^*_k(\mu)$}\label{apx:20}
It is much more difficult to calculate $d^*_k(1)$ than $f^*_k$.
Non-trivial part is the calculations of
\begin{align*}
\max\Bigl\{&\max_{\delta\in[k'/k, \min\{1,\mu k'\}]} \bigl\{h(\delta)/2 + \delta d^*_{k'}(\min\{1,\,\mu/\delta\})\bigr\},\\
&\max_{\delta\in[k'/k, \min\{1,\mu k'\}]} \bigl\{h(\delta)/2 + (1-\delta)d^*_{k-k'}(\min\{1,\,\delta/(k'(1-\delta))\})\}\Bigr\}.
\end{align*}
In the numerical calculations, the maximum for $\delta$ is taken for all $\delta \in \{1/2^{16}, 2/2^{16},\dotsc, 2^{16}/2^{16}\}$.
Computed $d^*_k(\mu)$ is cached and reused.
Then, we obtain $d^*_k(1)$ in Table~\ref{table:20}.

As another heuristic way, we assume that the functions $h(\delta)/2 + \delta d^*_{k'}(\min\{1,\,\mu/\delta\})$
and $h(\delta)/2 + (1-\delta)d^*_{k-k'}(\min\{1,\,\delta/(k'(1-\delta))\})$ are unimodal, which are functions with a single maximal.
On this assumption, we can apply the golden-section search for finding the maximum efficiently.
Obtained approximations of $d^*_k(1)$ are very close to those calculated by the first method.
Hence, we believe that the golden-section search gives very precise approximations of $d^*_k(1)$ which are shown in Table~\ref{tbl:20}.

\begin{table}[t]
\centering
\caption{Precise values of $d^*_k(1)$. For $k\le 13$, $d^*_k(1) = f^*_k$.}\label{tbl:20}
\begin{tabular}{|c|c|c|c|}
\hline
$k$& $d^*_k(1)$& $2^{d^*_k(1)}$ & $k'$\\
\hline
13 & 0.8873694503 & 1.8498001987 & 6\\
14 & 0.8937052065 & 1.8579416667 & 6\\
15 & 0.9487955413 & 1.9302604739 & 7\\
16 & 0.9487955413 & 1.9302604739 & 7\\
17 & 0.9487955413 & 1.9302604739 & 7\\
18 & 0.9535113456 & 1.9365803294 & 8\\
19 & 0.9689936620 & 1.9574747012 & 8\\
20 & 0.9690025400 & 1.9574867472 & 8\\
21 & 1.0086631422 & 2.0120457954 & 9\\
\hline
\end{tabular}
\end{table}

\section{Further improvements}
Theorem~\ref{thm:main} would be improved by the following idea.
When we consider the partition of a graph $G[V\setminus I]$ to $G[T]$ and $G[V\setminus I\setminus T]$ for some $T\subseteq V$ in Algorithm~\ref{alg:chr},
we can assume that independent sets in a coloring of $G[V\setminus I\setminus T]$ have size at least $n-t-2s$ since otherwise there exists more balanced partition.
Hence, we can use algorithms for the graph $\lfloor (n-t-s)/(n-t-2s)\rfloor$-coloring problem in Section~\ref{sec:kcol} for computing the chromatic number of $G[V\setminus I\setminus T]$.

\end{document}